\documentclass[a4paper,onecolumn,11pt]{quantumarticle}
\pdfoutput=1
\ExplSyntaxOn
\ExplSyntaxOff
\usepackage[utf8]{inputenc}
\usepackage[english]{babel}
\usepackage[T1]{fontenc}
\usepackage{amsmath}
\usepackage{hyperref}
\usepackage{amssymb,amsfonts,amsthm}
\usepackage{mathtools}
\usepackage{graphicx}
\usepackage{tikz}
\usepackage{booktabs}
\theoremstyle{plain}
\newtheorem{theorem}{Theorem}
\newtheorem{lemma}{Lemma}
\newtheorem{proposition}{Proposition}
\newtheorem{corollary}{Corollary}
\theoremstyle{definition}
\newtheorem{definition}{Definition}
\newtheorem{remark}{Remark}

\DeclareMathOperator{\cone}{cone}
\newcommand{\Mf}{\mathbb{M}_{\mathbb{F}}}
\newcommand{\F}{\mathbb{F}}

\begin{document}
\title{Witness robustness: An operational quantifier of measurement resources via free state discrimination}
\author{Jingsong Ao}
\author{Aby Philip}
\author{Alexander Streltsov}
\affiliation{Institute of Fundamental Technological Research, Polish Academy of Sciences, Pawi\'{n}skiego 5B, 02-106 Warsaw, Poland}

\maketitle

\begin{abstract}
We introduce the \emph{witness robustness} of quantum measurements, a resource quantifier whose admissible noise consists of tuples of free-state witnesses rather than physical measurements. We establish its operational interpretation: it quantifies the maximal advantage that a measurement can provide over free measurements in discriminating an ensemble composed entirely of free states. Unlike the standard and generalized robustnesses, the witness robustness is not faithful in general, reflecting the fact that a resourceful measurement need not be useful when only free states can be prepared. We identify conditions under which faithfulness is recovered and show that, in resource theories admitting a resource-destroying map, the witness robustness vanishes for every measurement. We also establish fundamental properties, including convexity and monotonicity. Finally, we derive analytical results for projective measurements in single-qubit magic and for binary pure-state projective measurements in the two-qubit PPT entanglement theory.
\end{abstract}

% ===========================================================================
\section{Introduction}
\label{sec:intro}
Quantum resource theories provide a general framework for describing the physical properties that enable quantum advantages~\cite{ChitambarGour2019}. A resource theory specifies which states, channels, and measurements are free, with all other objects regarded as resourceful. Commonly considered resource theories include entanglement~\cite{HorodeckiReview2009}, magic~\cite{Veitch2014,Howard2017}, coherence~\cite{Baumgratz2014,Streltsov2017}, and asymmetry~\cite{GourSpekkens2008}. Once these free sets have been identified, a central task is to quantify how much resource a given object, such as a state, channel, or measurement, contains. Resource measures address this question by assigning a numerical value to each object while respecting the transformations allowed by the theory~\cite{ChitambarGour2019}.

Robustness is a class of resource measures that quantifies the resources of states, channels, and measurements in a unified manner. By definition, it measures the minimum amount of noise that must be admixed with an object in order to make it free. If the noise is restricted to free objects, one obtains the standard robustness~\cite{VidalTarrach1999}; if it may be any physical object of the same type, one obtains the generalized robustness~\cite{Steiner2003,HarrowNielsen2003}. Both variants have been formulated for general convex resource theories~\cite{Regula2018,TakagiRegula2019}.

In this work, we study the \emph{witness robustness} of measurements, for which the noise is a tuple of free-state witnesses. These witnesses are observables whose expectation values are nonnegative on every free state. This goes beyond the usual restriction where the added noise is of the same physical type. Note that every operator of a measurement must be positive semidefinite and therefore has a nonnegative expectation value on every state, whereas the added noise is a free-state witness that is required to be nonnegative only on free states. We show that witness robustness has an operational interpretation in terms of a quantum state discrimination task.

The connection between quantum state discrimination, introduced in the works of Helstrom and Holevo~\cite{Helstrom1976,Holevo1973}, and robustness has already been established. The operational interpretation of the generalized robustness of measurements was introduced by Skrzypczyk and Linden~\cite{SkrzypczykLinden2019} and further developed into a general theoretical framework in~\cite{TakagiRegula2019,OszmaniecBiswas2019,Uola2019}. Simply put, this framework uses a state-discrimination task to quantify the amount of resource contained in a measurement. However, the states involved in such tasks, so far, are unrestricted and hence allowed to be resourceful states. If the state ensemble is required to consist entirely of free states, the corresponding robustness is no longer the generalized robustness, but the \emph{witness robustness} introduced in this work. This quantity takes into account not only the set of free measurements but also the set of free states, and quantifies the maximal advantage that a measurement can provide over free measurements in discriminating a free-state ensemble.

In general, the witness robustness is not faithful. For resource theories where witness robustness is not faithful, a resourceful measurement does not necessarily provide an advantage in the discrimination of free-state ensembles. In this paper, we show that witness robustness is faithful in resource theories where the set of free states has a non-empty interior~\cite{ao2026geometricorigins}. We also show that witness robustness is unfaithful in resource theories admitting a resource-destroying map~\cite{Liu2017}. This distinguishes it from both the generalized robustness and the standard robustness, which are always faithful~\cite{TakagiRegula2019,SkrzypczykLinden2019,OszmaniecBiswas2019}. Operationally, the faithfulness of the generalized robustness means that every resourceful measurement provides an advantage in the discrimination of some state ensemble, although the ensemble exhibiting this advantage need not consist entirely of free states. When the preparation of resource states is unavailable, the witness robustness therefore provides a more appropriate resource measure for quantum measurements. 

Note that the condition for the faithfulness of witness robustness is the same as the condition for the existence of free-state discrimination gaps in a resource theory, derived by the authors in~\cite{ao2026geometricorigins}. A free-state discrimination gap in a resource theory is defined by the existence of a set of free states that are strictly less distinguishable when using only free measurements than if all possible measurements were allowed. 

This result is interesting from the perspective of quantum data hiding~\cite{PhysRevLett.86.5807,EggelingWerner2002,DiVincenzo2002,Hayden_2004,Aubrun_2015}. The authors previously proved that, in discrimination tasks only involving free states, neither a finite-dimensional quantum memory nor a quantum catalyst can asymptotically improve the discrimination success rate~\cite{ao2026geometricorigins,philip2025robustnessdatahiding}. Free-state ensembles therefore possess a distinctive advantage in cryptographic security, providing a practical motivation for studying the witness robustness. Moreover, the maximum witness robustness over all measurements characterizes the optimal limit of data hiding achievable within the resource theory defined by its free measurements, free states, and free channels.

This work is structured as follows. Section~\ref{sec:prelim} fixes the setting and recalls the standard and generalized robustness together with the dual-cone structure. Section~\ref{sec:def} defines the witness robustness and establishes a hierarchy of robustness. We establish the operational interpretation of the witness robustness in Section~\ref{sec:main}. Section~\ref{sec:properties} investigates several fundamental properties of the witness robustness, including monotonicity and convexity. In Section~\ref{sec:faithful}, we investigate the resource theories in which witness robustness is faithful. In Section~\ref{sec:rdm}, we show that witness robustness vanishes for every measurement in resource theories that admit resource-destroying maps. The latter case implies that any free-state ensemble can be optimally discriminated by a free measurement. We also derive analytical results for two resource theories in Section~\ref{sec:examples}. The first concerns the witness robustness of projective measurements in the resource theory of single-qubit magic. The second concerns, in the two-qubit PPT entanglement theory, the binary measurement consisting of the projector onto a pure state and its orthogonal complement. Finally, in Section~\ref{sec:conclusion}, we conclude with limitations and open problems.

\section{Robustness measures and the witness cone}
\label{sec:prelim}

Before introducing the witness robustness, we recall two previously defined robustness measures. Each gauges resourcefulness of an object by asking how much noise must be mixed in before the object becomes free. The two differ only in which noise is admissible. We state them in parallel for states, channels, and measurements, since the constructions are identical. 

\subsection{Setting and notation}
\label{sec:setting}
To begin, we adopt the language of general probabilistic theories, which places states, measurements, and channels on a common geometric footing and recovers quantum theory as a special case~\cite{Hardy2001,Barrett2007,Plavala2023}. 

In a general probabilistic theory, a system is described by a finite-dimensional real vector space $V$. Its states form a compact convex set $\Omega(V)\subset V$, and we write $\sigma\in\Omega(V)$ for a state. The linear functionals on $V$ form the dual space $V^*$, paired with $V$ through $\langle\cdot,\cdot\rangle:V^*\times V\to\mathbb{R}$. An \emph{effect} is an element $e\in V^*$ with $0\le\langle e,\sigma\rangle\le1$ for all $\sigma\in\Omega(V)$, read as an outcome probability, and the unit $U\in V^*$ is the deterministic effect, $\langle U,\sigma\rangle=1$ for every $\sigma\in\Omega(V)$.

A measurement is a tuple $\mathcal{M}=\{M_i\}_i$ of effects summing to the unit, $\sum_i M_i=U$, so that $\{\langle M_i,\sigma\rangle\}_i$ is a probability distribution for every state; $\mathbb{M}$ denotes the set of all measurements, and we write $\mathcal{M}=\{M_i\}$ for a target measurement and $\mathcal{F}=\{F_i\}$ for a free one. Channels are the admissible transformations between systems, collected in $\mathbb{T}(V,V')$. A convex resource theory is specified by its free objects: a closed convex set of free states $\F\subseteq\Omega(V)$, free measurements $\Mf\subseteq\mathbb{M}$, and free channels $\mathbb{O}_{\F}\subseteq\mathbb{T}(V,V')$ when transformations are considered. 

To recover quantum mechanics, one can make the following choices: $V=\mathrm{Herm}(\mathbb{H})$ is the space of Hermitian operators on a finite-dimensional Hilbert space, $\Omega(V)$ is the set of density operators. The Hilbert--Schmidt inner product renders $V$ self-dual, so that $V^*$ is identified with $V$ and $\langle W,\sigma\rangle=\mathrm{Tr}(W\sigma)$. Effects are then operators $0\le M_i\le\mathbb{I}$, the unit effect is $U=\mathbb{I}$, and a measurement is a POVM. Now, we have everything to begin discussing robustness measures. 

\subsection{Standard robustness}

The standard robustness originates in the robustness of entanglement of Vidal and Tarrach~\cite{VidalTarrach1999}; it has since been formulated for general convex resource theories~\cite{Regula2018} and for quantum measurements~\cite{PhysRevLett.127.060402}. It is the most conservative robustness measure as it only admits objects that are already free as noise. The standard robustness, informally, the smallest weight of free noise such that mixing the target object with it yields a free object. 

For a state $\rho\in\Omega(V)$ with $\F\subseteq\Omega(V)$ closed and convex~\cite{Regula2018},
\begin{equation}
R_{\F}^{\mathrm{std}}(\rho) := \inf\left\{ r\ge0 \;\middle|\;
\frac{\rho+r\sigma}{1+r}\in\F,\ \sigma\in\F \right\}.
\end{equation}

For a channel $\Lambda\in\mathbb{T}(V,V')$ with $\mathbb{O}_{\F}$ closed and convex~\cite{PhysRevLett.127.060402},
\begin{equation}
R_{\mathbb{O}_{\F}}^{\mathrm{std}}(\Lambda) := \inf\left\{ r\ge0 \;\middle|\;
\frac{\Lambda+r\Theta}{1+r}\in\mathbb{O}_{\F},\ \Theta\in\mathbb{O}_{\F} \right\}.
\end{equation}

For a measurement $\mathcal{M}=\{M_i\}_i\in\mathbb{M}$ with $\Mf$ closed and convex~\cite{PhysRevLett.127.060402},
\begin{equation}
R_{\Mf}^{\mathrm{std}}(\mathcal{M}) := \inf\left\{ r\ge0 \;\middle|\;
\frac{\mathcal{M}+r\mathcal{N}}{1+r}\in\Mf,\ \mathcal{N}\in\Mf \right\}.
\end{equation}

\subsection{Generalized robustness}

The generalized robustness was introduced for entanglement by Steiner~\cite{Steiner2003}. It serves as a canonical monotone in general convex resource theories~\cite{BrandaoGour2015,Regula2018,TakagiRegula2019}, including those of quantum measurements~\cite{OszmaniecBiswas2019,Skrzypczyk2019}. It keeps the same template but allows the noise to range over the full physical set of objects of the relevant type.

For a state~\cite{BrandaoGour2015},
\begin{equation}
R_{\F}(\rho) := \inf\left\{ r\ge0 \;\middle|\;
\frac{\rho+r\sigma}{1+r}\in\F,\ \sigma\in\Omega(V) \right\}.
\end{equation}

For a channel~\cite{TakagiRegula2019},
\begin{equation}
R_{\mathbb{O}_{\F}}(\Lambda) := \inf\left\{ r\ge0 \;\middle|\;
\frac{\Lambda+r\Theta}{1+r}\in\mathbb{O}_{\F},\ \Theta\in\mathbb{T}(V,V') \right\}.
\end{equation}

For a measurement~\cite{OszmaniecBiswas2019,Skrzypczyk2019},
\begin{equation}
R_{\Mf}(\mathcal{M}) := \inf\left\{ r\ge0 \;\middle|\;
\frac{\mathcal{M}+r\mathcal{N}}{1+r}\in\Mf,\ \mathcal{N}\in\mathbb{M} \right\}.
\end{equation}

The two variants differ only in which noise counts as admissible, and enlarging the noise set can only make the free object easier to reach, so $R^{\mathrm{std}}\ge R$ always. In both cases, however, the noise remains an object of the same kind as the target---a measurement is perturbed by another measurement. The witness robustness of Sec.~\ref{sec:def} breaks with this: it lets the noise be a tuple of free-state witnesses, observables drawn from the dual cone introduced next.

\subsection{Cones and dual cones}

We define the conical hull of the free states and its dual cone, 
\begin{align}
\cone(\F) &:= \{ \lambda\sigma \mid \lambda\ge0,\ \sigma\in\F \},\\
\cone^*(\F) &:= \left\{ W\in V^* \,\middle|\,
   \langle W,\sigma\rangle\ge 0,\ \forall\sigma\in\cone(\F) \right\}\nonumber\\
&= \left\{ W\in V^* \,\middle|\,
   \langle W,\sigma\rangle\ge 0,\ \forall\sigma\in\F \right\}.
\end{align}
where $V^*$ is the dual space of $V$. This dual cone is exactly the set of all witnesses of free states. If $\F\subseteq\Omega(V)$, the cones satisfy the inclusion relation~\cite{boyd2004convex}
\begin{equation}
\cone(\F)\subseteq\cone(\Omega(V))
\implies \cone^*(\Omega(V))\subseteq\cone^*(\F),
\end{equation}
that is, \emph{the smaller the cone, the larger its dual}.

A witness $W\in\cone^*(\F)$ is simply an observable that assigns a non-negative expectation value to every free state. The dual cone, $\cone^*(\F)$, is the set of all such witnesses. The inclusion above then has an intuitive meaning: shrinking the free state set enlarges the family of observables that stay non-negative on it, so a more restrictive resource theory comes equipped with a richer supply of witnesses. We will exploit this enlarged set of admissible noise in the next section.

\section{Witness robustness of a measurement}
\label{sec:def}

We now define our quantity of interest, witness robustness.
\begin{definition}[Witness robustness]
\label{def:wr}
Let $\mathcal{M}=\{M_i\}_i\in\mathbb{M}$, and let $\Mf\subseteq\mathbb{M}$ and $\F\subseteq\Omega(V)$ be closed and convex. The witness robustness is
\begin{equation}
\label{eq:wr}
R_{\Mf}^{\F}(\mathcal{M}) := \inf\left\{ r\ge0 \;\middle|\;
\frac{\mathcal{M}+r\mathcal{W}}{1+r}\in\Mf,\ \mathcal{W}=\{W_i\},\ W_i\in\cone^*(\F) \right\}.
\end{equation}
\end{definition}

The two robustness measures of the previous section both admix as noise an object of the same kind as the target: a measurement is perturbed by another measurement. The witness robustness makes a single change: the noise components $W_i$ need no longer be effects, but are drawn from the larger set of free-state witnesses $W_i\in\cone^*(\F)$ that are observables in $V^*$ such that $\langle W_i,\sigma\rangle\ge 0$ for every free state $\sigma$. The tuple $\mathcal{W}$ is then not required to be a measurement, and geometrically each noise component is relaxed from the set of effects to the much larger dual cone.

\begin{figure}[t]
  \centering
  \begin{tikzpicture}[scale=1.0]
  \draw[->,blue!45!black,thick] (0,-2.7) -- (-4.7,1.53);
  \draw[->,blue!45!black,thick] (0,-2.7) -- ( 4.7,1.53);
  \fill[blue!45!black] (0,-2.7) circle (1.3pt);
  \node[font=\footnotesize,blue!45!black,anchor=north] at (0,-2.78) {$0$};
  \node[font=\footnotesize,blue!55!black] at (0,1.9) {$[\cone^*(\F)]^{\times N}$};
  \fill[blue!12] (0,0) ellipse (2.3 and 1.4);
  \fill[blue!26] (0,0) ellipse (1.0 and 0.65);
  \node[font=\footnotesize] at (0,0) {$\Mf$};
  \node[font=\footnotesize] at (0,1.05) {$\mathbb{M}$};
  \fill (1.55,-0.5) circle (1.7pt);
  \node[font=\footnotesize,above right] at (1.55,-0.5) {$\mathcal{M}$};
  \fill (-2.6,1.5) circle (1.7pt);
  \node[font=\footnotesize,above] at (-2.6,1.5) {$\mathcal{W}$};
  \draw[dashed,thick] (1.55,-0.5) -- (-2.6,1.5);
  \fill[red!75!black] (0.947,-0.209) circle (1.5pt);
  \node[font=\footnotesize,red!70!black,anchor=north] at (0.9,-0.42)
      {$\tfrac{\mathcal{M}+r\mathcal{W}}{1+r}$};
  \end{tikzpicture}
  \caption{Witness robustness as a relaxation of the admixed noise. All sets live in the space of $N$-tuples of observables: the free measurements $\Mf$ sit inside all measurements $\mathbb{M}$, both bounded convex sets, which in turn sit inside the \emph{unbounded} product cone $[\cone^*(\F)]^{\times N}$ of witness tuples---tuples $\mathcal{W}=\{W_i\}$ whose every component is a free-state witness---with apex at the zero tuple $0$. The witness robustness mixes the resourceful target $\mathcal{M}$ with such a tuple $\mathcal{W}$---\emph{not} with another measurement---until the convex combination $(\mathcal{M}+r\mathcal{W})/(1+r)$ first meets $\Mf$; the least weight $r$ is $R^{\F}_{\Mf}(\mathcal{M})$. Restricting the noise instead to $\mathbb{M}$ or to $\Mf$ recovers the generalized and standard robustness, so that $R^{\F}_{\Mf}\le R_{\Mf}\le R^{\mathrm{std}}_{\Mf}$.}
  \label{fig:schematic}
\end{figure}

The three noise sets are nested: free measurements lie inside the set of all measurements, which in turn lies inside the product cone $[\cone^*(\F)]^{\times N}$ of witness tuples, see Fig.~\ref{fig:schematic}. Hence the witness robustness satisfies
\begin{equation}\label{eq:hierarchy}
    R_{\Mf}^{\F}(\mathcal{M}) \le R_{\Mf}(\mathcal{M}) \le R_{\Mf}^{\mathrm{std}}(\mathcal{M}).
\end{equation}
The generalized robustness admits a discrimination-task characterization with arbitrary state ensembles~\cite{OszmaniecBiswas2019,SkrzypczykLinden2019,TakagiRegula2019}. In the next section, we present Theorem~\ref{thm:main} which provides the analogue for the witness robustness, the smallest of the three, with the ensembles restricted to free states.

\section{Operational interpretation}
\label{sec:main}

We now turn to the operational interpretation of witness robustness. We begin by considering a discrimination game played with free states alone: a referee draws from a free-state ensemble by picking a label $i$ with probability $p_i$ and preparing the free state $\sigma_i\in\F$ for the player; the player applies a fixed measurement and guesses the label from its outcome. Formally, a \emph{free-state ensemble} is a collection $A_{\F}=\{p_i,\sigma_i\}_{i=1}^{N}$ with $\sigma_i\in\F$, $p_i\ge0$, $\sum_i p_i=1$, and the success probability of a measurement $\mathcal{M}=\{M_i\}_{i=1}^N$ is
\begin{equation}
P_{\text{succ}}(A_{\F},\mathcal{M}):=\sum_{i=1}^{N}p_i\,\langle M_i,\sigma_i\rangle.
\end{equation}
The theorem below shows that the maximal advantage of $\mathcal{M}$ over the best free measurement in this task is controlled exactly by the witness robustness, mirroring the characterization known for the generalized robustness~\cite{TakagiRegula2019,OszmaniecBiswas2019} in the restricted free-state setting.

\begin{theorem}[Operational interpretation]
\label{thm:main}
Fix $N\in\mathbb{N}$. Let $\F\subseteq\Omega(V)$ be a non-empty closed convex set of free states, and let $\Mf$ be a closed convex set of $N$-outcome free measurements. Then, for every $N$-outcome measurement $\mathcal{M}$,
\begin{equation}
\label{eq:main}
\max_{A_{\F}} \frac{P_{\text{succ}}(A_{\F},\mathcal{M})}{\max_{\mathcal{F}\in\Mf} P_{\text{succ}}(A_{\F},\mathcal{F})} = 1 + R_{\Mf}^{\F}(\mathcal{M}).
\end{equation}
\end{theorem}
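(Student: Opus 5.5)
We prove the two inequalities separately. The lower bound (``$\le$'') is a direct substitution into the definition. The upper bound (``$\ge$'') is a conic-duality argument in the style of the generalized-robustness proofs of~\cite{TakagiRegula2019,OszmaniecBiswas2019}; there the dual variable will be a free-state ensemble rather than an arbitrary one.

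\emph{Easy direction.} Let $r$ be feasible in Definition~\ref{def:wr}, so that $\mathcal{F}:=(\mathcal{M}+r\mathcal{W})/(1+r)\in\Mf$ for some $W_i\in\cone^*(\F)$. For any free-state ensemble $A_{\F}=\{p_i,\sigma_i\}$ we have $\langle W_i,\sigma_i\rangle\ge0$. Linearity of $P_{\text{succ}}$ in the measurement then gives
\begin{equation}
(1+r)P_{\text{succ}}(A_{\F},\mathcal{F}) = P_{\text{succ}}(A_{\F},\mathcal{M}) + r\sum_i p_i\langle W_i,\sigma_i\rangle \ge P_{\text{succ}}(A_{\F},\mathcal{M}).
\end{equation}
Hence the ratio is at most $1+r$ for every ensemble. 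Taking the infimum over feasible $r$ bounds the left-hand side of~\eqref{eq:main} by $1+R^{\F}_{\Mf}(\mathcal{M})$. We also need the denominator to be positive. Since $\F\neq\emptyset$, the ensemble with all $\sigma_i$ equal to a single free state gives every measurement success probability $\max_i p_i$-type weights summing appropriately; more simply, $\sum_i\langle F_i,\sigma\rangle=1$ forces some $\langle F_i,\sigma\rangle>0$. I would handle a possibly zero denominator by restricting to ensembles where it is positive, or by noting that in that case the numerator is also $0$.

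\emph{Hard direction via conic duality.} Write $1+R^{\F}_{\Mf}(\mathcal{M})$ as a conic program. Set $s=1+r$ and $\tilde{\mathcal{F}}=s\mathcal{F}$; then the constraint reads $s\mathcal{F}-\mathcal{M}\in[\cone^*(\F)]^{\times N}$. So the quantity is
\begin{equation}
\min\{ s \mid \tilde{\mathcal F}\in\cone(\Mf),\ \tilde F_i - M_i\in\cone^*(\F)\ \forall i,\ \tilde{\mathcal F}=s\mathcal F\}.
\end{equation}
Because each $\mathcal F\in\Mf$ satisfies $\sum_i F_i=U$, the scale is recovered as $s=\langle U,\tilde F\rangle$-type data, i.e.\ $\sum_i\tilde F_i=sU$. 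The Lagrange multipliers for the constraints $\tilde F_i-M_i\in\cone^*(\F)$ live in $\cone^{**}(\F)=\cone(\F)$; here we need closedness of $\cone(\F)$, which holds because $\F$ is compact and does not contain $0$, since $\langle U,\sigma\rangle=1$. Writing these multipliers as $\omega_i=p_i\sigma_i$ with $\sigma_i\in\F$ and $p_i\ge0$, the dual program becomes
\begin{equation}
\max\Big\{\sum_i\langle M_i,\omega_i\rangle \ \Big|\ \omega_i\in\cone(\F),\ \sum_i\langle F_i,\omega_i\rangle\le1\ \forall\mathcal F\in\Mf\Big\}.
\end{equation}
Normalizing $\sum_i p_i=1$ and rescaling turns its value into exactly $\max_{A_\F}P_{\text{succ}}(A_\F,\mathcal M)/\max_{\mathcal F}P_{\text{succ}}(A_\F,\mathcal F)$. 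Weak duality reproduces the easy direction, so what remains is strong duality.

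\emph{Main obstacle: zero duality gap.} I expect this step to be the hardest. To show it, I would avoid a full Slater argument and instead use a direct separating-hyperplane argument. Suppose $t<1+R^{\F}_{\Mf}(\mathcal M)$. Then the tuple $\mathcal M$ lies outside the closed convex set
\begin{equation}
t\Mf-[\cone^*(\F)]^{\times N}.
\end{equation}
This set is a compact set plus a closed cone, hence closed. A strictly separating functional yields $\{\omega_i\}$ nonnegative on $\cone^*(\F)$, i.e.\ $\omega_i\in\cone(\F)$, satisfying
\begin{equation}
\sum_i\langle M_i,\omega_i\rangle > t\max_{\mathcal F\in\Mf}\sum_i\langle F_i,\omega_i\rangle.
\end{equation}
Normalizing $\omega_i=p_i\sigma_i$ gives a free-state ensemble whose ratio exceeds $t$. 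Here the denominator is positive by the remark in the easy direction, since not all $\omega_i$ vanish. The infimum is also attained, because the feasible set is closed, so the supremum over ensembles is a maximum; compactness of the set of ensembles, which holds since $\F$ is compact, lets us write $\max$ in~\eqref{eq:main}.
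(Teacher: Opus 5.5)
Your proof is correct, but it reaches the nontrivial inequality with a different duality tool than the paper. The paper fixes $\lambda$ and applies Sion's minimax theorem to the bilinear function $\langle\mathcal M,\omega\rangle-\lambda\langle\mathcal F,\omega\rangle$ on the compact convex sets $\Omega_{\F}$ (normalized tuples in $\cone(\F)$) and $\Mf$. The max--min side says $\lambda$ dominates every free-ensemble ratio. The min--max side says some $\mathcal F\in\Mf$ has $\lambda F_i-M_i\in\cone^*(\F)$. Both inequalities thus come out at once, and the paper then translates $\lambda=1+r$.

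You instead obtain the easy inequality by direct substitution, which is more transparent than the paper's chain of equivalences. You obtain the hard one by strictly separating $\mathcal M$ from the closed convex set $t\Mf-[\cone^*(\F)]^{\times N}$. Because the paper's dual variable already lives in $\cone(\F)$, the witness condition appears there from the bare definition of $\cone^*(\F)$. In your route the separating functional is only known to be nonnegative on $\cone^*(\F)$. You therefore need the bipolar identity $(\cone^*(\F))^*=\cone(\F)$, and hence closedness of $\cone(\F)$, which you correctly derive from compactness of $\F$ and $0\notin\F$. In exchange, your argument uses only finite-dimensional Hahn--Banach. It also makes visible where compactness of $\Mf$ enters (closedness of the Minkowski sum), and it exhibits, for each $t<1+R^{\F}_{\Mf}(\mathcal M)$, an explicit free ensemble whose ratio exceeds $t$.

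Three details should be tightened.

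First, the claim $\mathcal M\notin t\Mf-[\cone^*(\F)]^{\times N}$ for $t<1+R^{\F}_{\Mf}(\mathcal M)$ hides exactly the $\lambda=1$ versus $r=0$ mismatch that the paper treats explicitly:
\begin{itemize}
\item for $t<1$, summing gives $(t-1)U\in\cone^*(\F)$, which is impossible since $\F\neq\emptyset$;
\item for $t>1$, take $W_i=(tF_i-M_i)/(t-1)$;
\item at $t=1$, either use that $F_i-M_i\in\cone^*(\F)$ implies $(1+\epsilon)F_i-M_i\in\cone^*(\F)$ for every $\epsilon>0$ (effects are nonnegative on states), or simply skip $t=1$ by density.
\end{itemize}

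Second, ``not all $\omega_i$ vanish'' does not make $\max_{\mathcal F\in\Mf}\sum_i\langle F_i,\omega_i\rangle$ positive unless, e.g., the trivial measurement $F_i=U/N$ is free. If the denominator vanishes, your strict inequality makes the ratio $+\infty$, and your easy direction then forces $R^{\F}_{\Mf}(\mathcal M)=\infty$. So the statement still holds in the extended sense.

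Third, attainment of the outer maximum should come from continuity of the ratio on the compact set $\Omega_{\F}$ when the denominator is bounded away from zero. It does not follow from attainment of the infimum defining $R^{\F}_{\Mf}$. That infimum can in fact fail to be attained in unfaithful theories, since $r=0$ is feasible only if $\mathcal M\in\Mf$.

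The paper is no more careful than you on the last two points.
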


\begin{proof}
To begin, define $\omega = \{\omega_i\}$, where $\omega_i = p_i \sigma_i$ with $\sigma_i \in \mathbb{F}$. This implies $\omega_i \in \text{cone}(\mathbb{F})$.
We define the set $\Omega_{\mathbb{F}}$ as:
\begin{equation}
\Omega_{\mathbb{F}} := \left\{ \omega = \{\omega_i\} \;\middle|\; \omega_i \in \text{cone}(\mathbb{F}), \quad \sum_i \langle U, \omega_i \rangle = 1 \right\}
\end{equation}
Since $\mathbb{F}$ is compact (a closed, bounded subset of the state space) and convex, so is $\Omega_{\mathbb{F}}$; likewise $\mathbb{M}_{\mathbb{F}}$ is compact (its effects obey $0 \le F_i \le U$) and convex.\newline

\noindent Given a target measurement, $\mathcal{M}$, and for any $\lambda\ge0$, $\omega \in \Omega_{\mathbb{F}}$, and $\mathcal{F} \in \mathbb{M}_{\mathbb{F}}$, we define the following function:
\begin{equation}
f(\omega, \mathcal{F}; \lambda) := \langle \mathcal{M}, \omega \rangle - \lambda \langle \mathcal{F}, \omega \rangle
\end{equation}
We define the value function $V(\lambda)$ as:
\begin{equation}
V(\lambda) := \max_{\omega \in \Omega_{\mathbb{F}}} \min_{\mathcal{F} \in \mathbb{M}_{\mathbb{F}}} f(\omega, \mathcal{F}; \lambda)
\end{equation}
As $\Omega_{\mathbb{F}}$ and $\mathbb{M}_{\mathbb{F}}$ are closed and bounded in finite dimension, they are compact. Note that, $f(\omega, \mathcal{F}; \lambda)$ is continuous, concave in $\omega$ for
every fixed $\mathcal{F}$, and convex in $\mathcal{F}$ for every fixed
$\omega$. Using Sion's minimax theorem~\cite{Sion1958}, we can swap the max and min operators:
\begin{equation}\label{eqn:min_max_theorem_1}
    \max_{\omega \in \Omega_{\mathbb{F}}} \min_{\mathcal{F} \in \mathbb{M}_{\mathbb{F}}} f(\omega, \mathcal{F}; \lambda) = \min_{\mathcal{F} \in \mathbb{M}_{\mathbb{F}}} \max_{\omega \in \Omega_{\mathbb{F}}} f(\omega, \mathcal{F}; \lambda)
\end{equation}

\vspace{1em}
\noindent Recall that we are looking to quantify advantage provided by the target measurement. Hence, we want to find the minimal $\lambda$ to make $V(\lambda) \le 0$. Then,
\begin{align}
V(\lambda)= \max_{\omega \in \Omega_{\mathbb{F}}} \min_{\mathcal{F} \in \mathbb{M}_{\mathbb{F}}} f(\omega, \mathcal{F}; \lambda) &\le 0 
\end{align}
Since the maximum of a function is non-positive, we have that 
\begin{align}
\forall \omega \in \Omega_{\mathbb{F}} : \quad \min_{\mathcal{F} \in \mathbb{M}_{\mathbb{F}}} f(\omega, \mathcal{F}; \lambda) &\le 0 
\end{align}
And, since $\lambda\ge 0$, minimizing over $\mathbb{M}_{\mathbb{F}}$ yields:
\begin{align}
\langle \mathcal{M}, \omega \rangle - \lambda \max_{\mathcal{F} \in \mathbb{M}_{\mathbb{F}}} \langle \mathcal{F}, \omega \rangle &\le 0 
\end{align}
By rearranging, we get
\begin{align}
\lambda &\ge \frac{\langle \mathcal{M}, \omega \rangle}{\max_{\mathcal{F} \in \mathbb{M}_{\mathbb{F}}} \langle \mathcal{F}, \omega \rangle}
\end{align}

\vspace{1em}
\noindent The above expression can be rewritten in terms of the probability of distinguishing the free-state ensemble $A_{\mathbb{F}}$ in the following fashion:
\begin{equation}
    \lambda \ge \frac{P_{\text{succ}}(A_{\mathbb{F}}, \mathcal{M})}{\max_{\mathcal{F} \in \mathbb{M}_{\mathbb{F}}} P_{\text{succ}}(A_{\mathbb{F}}, \mathcal{F})}.
\end{equation}
This inequality needs to hold for all $A_{\mathbb{F}}$, therefore the inequality must hold for the maximum over all $A_{\mathbb{F}}$ since $\Omega_{\mathbb{F}}$ is compact. Hence 
\begin{equation}
\lambda \geq \max_{A_{\mathbb{F}}} \frac{P_{\text{succ}}(A_{\mathbb{F}}, \mathcal{M})}{\max_{\mathcal{F} \in \mathbb{M}_{\mathbb{F}}} P_{\text{succ}}(A_{\mathbb{F}}. \mathcal{F})}
\end{equation}
Conversely, the displayed bound implies the preceding inequality for every $A_{\F}$, and hence $V(\lambda)\le0$; therefore, all the implications above are equivalences.
\vspace{1em}
Now, let us examine the right hand side of \eqref{eqn:min_max_theorem_1}.
\begin{align}
\min_{\mathcal{F} \in \mathbb{M}_{\mathbb{F}}} \max_{\omega \in \Omega_{\mathbb{F}}} f(\omega, \mathcal{F}; \lambda) &\le 0 
\end{align}
Since the minimum of a function is non-positive, we have that there exists a $\mathcal{F} \in \mathbb{M}_{\mathbb{F}}$ such that 
\begin{align}
    \max_{\omega \in \Omega_{\mathbb{F}}} f(\omega, \mathcal{F}; \lambda) \le 0
\end{align}
We expand the inner product using the ensemble $\{p_i, \sigma_i\}$ where $\sigma_i \in \mathbb{F}$:
\begin{align}
    &\max_{\{p_i, \sigma_i\}, \sigma_i \in \mathbb{F}} \sum_i p_i \langle M_i - \lambda F_i, \sigma_i \rangle \le 0 \\
    &\implies \max_{\sigma \in \mathbb{F},\, i} \langle M_i - \lambda F_i, \sigma \rangle \le 0
\end{align}
\noindent Flipping the sign, we get
\begin{align}
    \exists \mathcal{F} \in \mathbb{M}_{\mathbb{F}} &: \quad \forall i, \; \forall \sigma \in \mathbb{F},\,\, \langle \lambda F_i - M_i, \sigma \rangle \ge 0 \\
    \exists \mathcal{F} \in \mathbb{M}_{\mathbb{F}} &, \,\, \forall i, \quad \lambda F_i - M_i \in \text{cone}^*(\mathbb{F})
\end{align}

\vspace{1em}
\noindent The tuple $\{\lambda F_i - M_i\}_i$ acts as a tuple of free-state witnesses.

\vspace{1em}
\noindent Conversely, these cone conditions imply $f(\omega,\mathcal F;\lambda)\le0$ for every $\omega\in\Omega_{\F}$, and hence $\min_{\mathcal F\in\Mf}\max_{\omega\in\Omega_{\F}} f(\omega,\mathcal F;\lambda)\le0$; therefore, all the implications above are equivalences.

\vspace{1em}
\noindent Combining the two sides of Sion's identity, the admissible $\lambda$ coincide:
\begin{align}
    \left\{\lambda \;\middle|\;V(\lambda)\leq0\right\}&=
\left\{ \lambda \;\middle|\; \lambda \ge \max_{A_{\mathbb{F}}} \frac{P_{\text{succ}}(A_{\mathbb{F}}, \mathcal{M})}{\max_{\mathcal{F} \in \mathbb{M}_{\mathbb{F}}} P_{\text{succ}}(A_{\mathbb{F}}, \mathcal{F})} \right\}\\
&= \left\{ \lambda \;\middle|\; \forall i,\,\, \lambda F_i - M_i \in \text{cone}^*(\mathbb{F}),\, \{F_i\} \in \mathbb{M}_{\mathbb{F}}\right\}.
\end{align}
Taking infima on both sides,
\begin{equation}
\begin{split}
&\inf \left\{ \lambda \;\middle|\; \lambda \ge \max_{A_{\mathbb{F}}} \frac{P_{\text{succ}}(A_{\mathbb{F}}, \mathcal{M})}{\max_{\mathcal{F} \in \mathbb{M}_{\mathbb{F}}} P_{\text{succ}}(A_{\mathbb{F}}, \mathcal{F})} \right\} \\
&\qquad = \inf \left\{ \lambda \;\middle|\; \forall i,\,\, \lambda F_i - M_i \in \text{cone}^*(\mathbb{F}),\, \{F_i\} \in \mathbb{M}_{\mathbb{F}} \right\},
\end{split}
\label{eq:two_sides_theorem_1}
\end{equation}
where the left-hand side is exactly $\max_{A_{\mathbb{F}}} P_{\text{succ}}(A_{\mathbb{F}}, \mathcal{M}) / \max_{\mathcal{F} \in \mathbb{M}_{\mathbb{F}}} P_{\text{succ}}(A_{\mathbb{F}}, \mathcal{F})$. 

Note that by summing $\lambda F_i - M_i \in \text{cone}^*(\mathbb{F})$ over all $i$ and using $\sum_i F_i = \sum_i M_i = U$, we get
\begin{equation}
\sum_i (\lambda F_i - M_i) = (\lambda - 1) U \in \text{cone}^*(\mathbb{F}).
\end{equation}
By the definition of $\text{cone}^*(\mathbb{F})$, for any $\sigma \in \mathbb{F}$, $\langle (\lambda - 1) U, \sigma \rangle \ge 0$; since $\langle U, \sigma \rangle = 1$, this gives $\lambda \ge 1$. We accordingly set $r := \lambda - 1 \ge 0$.

We relate the right-hand side of \eqref{eq:two_sides_theorem_1} to the witness robustness
\begin{equation}
R_{\mathbb{M}_{\mathbb{F}}}^{\mathbb{F}}(\mathcal{M}) = \inf \left\{ r \ge0 \;\middle|\; \frac{\mathcal{M} + r\mathcal{W}}{1+r} \in \mathbb{M}_{\mathbb{F}},\ \mathcal{W} = \{W_i\},\ W_i \in \text{cone}^*(\mathbb{F}) \right\},
\end{equation}
We will now prove that
\begin{equation}\label{eq:inf}
\begin{split}
    &\inf \left\{ \lambda \ge 1 \;\middle|\; \forall i,\,\, \lambda F_i - M_i \in \text{cone}^*(\mathbb{F}),\, \{F_i\} \in \mathbb{M}_{\mathbb{F}} \right\} \\
    &\quad = \ 1 + \inf \left\{ r \ge 0 \;\middle|\; \exists \mathcal{W} = \{W_i\},\ W_i \in \text{cone}^*(\mathbb{F}),\ \frac{\mathcal{M} + r\mathcal{W}}{1+r} \in \mathbb{M}_{\mathbb{F}} \right\}.
\end{split}
\end{equation}
For $\lambda > 1$ (i.e.\ $r > 0$), set $\lambda = 1 + r$ and define
\begin{equation}
W_i := \frac{(1+r)F_i - M_i}{r} = \frac{\lambda F_i - M_i}{r}.
\end{equation}
Since $\text{cone}^*(\mathbb{F})$ is a cone and $r > 0$, we have $\lambda F_i - M_i \in \text{cone}^*(\mathbb{F}) \iff W_i \in \text{cone}^*(\mathbb{F})$. Rearranging,
\begin{equation}
F_i = \frac{r W_i + M_i}{1+r},
\end{equation}
so $\mathcal{F} = \{F_i\}_{i=1}^N \in \mathbb{M}_{\mathbb{F}}$ is equivalent to $\frac{\mathcal{M} + r\mathcal{W}}{1+r} \in \mathbb{M}_{\mathbb{F}}$. Hence the two sets
\begin{equation}
\left\{ \lambda > 1 \;\middle|\; \forall i,\,\, \lambda F_i - M_i \in \text{cone}^*(\mathbb{F}),\, \{F_i\} \in \mathbb{M}_{\mathbb{F}} \right\} 
\end{equation}
and
\begin{equation}
\left\{ r > 0 \;\middle|\; \frac{\mathcal{M} + r\mathcal{W}}{1+r} \in \mathbb{M}_{\mathbb{F}},\,\,\mathcal{W} = \{W_i\},\ W_i \in \text{cone}^*(\mathbb{F}) \right\}
\end{equation}
are in one-to-one correspondence via $\lambda = 1 + r$.

\vspace{0.5em}
\noindent For $\lambda = 1$ the condition is $\exists \mathcal{F} \in \mathbb{M}_{\mathbb{F}} : \forall i,\ F_i - M_i \in \text{cone}^*(\mathbb{F})$, which is \emph{more general} than the $r = 0$ condition
\begin{equation}
\frac{\mathcal{M} + 0 \cdot \mathcal{W}}{1+0} \in \mathbb{M}_{\mathbb{F}}, \quad \text{i.e.} \quad \mathcal{M} \in \mathbb{M}_{\mathbb{F}}, \quad M_i = F_i.
\end{equation}
So every $r = 0$ case gives $\lambda = 1$, but a $\lambda = 1$ case need not give $r = 0$. This mismatch does not affect the infima: both sets are upward closed (if $x$ is in the set, so is any $y \ge x$), so discarding the single smallest point $\lambda = 1$ ($r = 0$) leaves the infimum unchanged:
\begin{align}
&\inf \left\{ \lambda \ge 1 \;\middle|\; \forall i,\,\, \lambda F_i - M_i \in \text{cone}^*(\mathbb{F}),\, \{F_i\} \in \mathbb{M}_{\mathbb{F}} \right\} \notag\\
&= \inf \left\{ \lambda > 1 \;\middle|\; \forall i,\,\, \lambda F_i - M_i \in \text{cone}^*(\mathbb{F}),\, \{F_i\} \in \mathbb{M}_{\mathbb{F}} \right\}  \notag\\
&=1 + \inf  \left\{ r > 0 \;\middle|\; \frac{\mathcal{M} + r\mathcal{W}}{1+r} \in \mathbb{M}_{\mathbb{F}},\,\,\mathcal{W} = \{W_i\},\ W_i \in \text{cone}^*(\mathbb{F})  \right\}  \\
&=1 + \inf \left\{ r \ge 0 \;\middle|\; \frac{\mathcal{M} + r\mathcal{W}}{1+r} \in \mathbb{M}_{\mathbb{F}},\,\,\mathcal{W} = \{W_i\},\ W_i \in \text{cone}^*(\mathbb{F}) \right\}
\end{align}
The second equality follows from $\lambda = 1 + r$ whenever $\lambda>1$. Thus
\begin{equation}
\begin{split}
&\inf \left\{ \lambda \ge 1 \;\middle|\; \forall i,\,\, \lambda F_i - M_i \in \text{cone}^*(\mathbb{F}),\, \{F_i\} \in \mathbb{M}_{\mathbb{F}} \right\} \\
&\qquad = 1 + \inf \left\{ r \ge 0 \;\middle|\; \frac{\mathcal{M} + r\mathcal{W}}{1+r} \in \mathbb{M}_{\mathbb{F}},\,\,\mathcal{W} = \{W_i\},\ W_i \in \text{cone}^*(\mathbb{F})  \right\}.
\end{split}
\label{eq:2}
\end{equation}
\vspace{0.5em}
Since the left-hand side equals $\max_{A_{\mathbb{F}}} P_{\text{succ}}(A_{\mathbb{F}}, \mathcal{M}) / \max_{\mathcal{F} \in \mathbb{M}_{\mathbb{F}}} P_{\text{succ}}(A_{\mathbb{F}}, \mathcal{F})$, combining with \eqref{eq:2} gives
\begin{equation}
\max_{A_{\mathbb{F}}} \frac{P_{\text{succ}}(A_{\mathbb{F}}, \mathcal{M})}{\max_{\mathcal{F} \in \mathbb{M}_{\mathbb{F}}} P_{\text{succ}}(A_{\mathbb{F}}, \mathcal{F})} = 1 + R_{\mathbb{M}_{\mathbb{F}}}^{\mathbb{F}}(\mathcal{M}).
\end{equation}
This concludes the proof.
\end{proof}

\section{Properties of the witness robustness}
\label{sec:properties}
This section collects the basic properties of the witness robustness. The first couples properties are a set of inequalities connecting the sets of free-states and free measurements with witness robustness.
\begin{lemma}[Hierarchy in the free-state set]
\label{lem:mono1}
If $\F_1\subseteq\F_2$, then $\cone^*(\F_1)\supseteq\cone^*(\F_2)$, and consequently
  \begin{equation}
      R_{\Mf}^{\F_1}(\mathcal{M})\le R_{\Mf}^{\F_2}(\mathcal{M}).
  \end{equation}
\end{lemma}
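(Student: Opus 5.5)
The argument has two steps: first the reversal of the dual-cone inclusion, then monotonicity of the infimum under enlarging the feasible noise set.

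For the first claim, take any $W\in\cone^*(\F_2)$. By definition, $\langle W,\sigma\rangle\ge0$ for every $\sigma\in\F_2$. Since $\F_1\subseteq\F_2$, this holds in particular for every $\sigma\in\F_1$, so $W\in\cone^*(\F_1)$. This is the same ``smaller cone, larger dual'' principle recalled in Sec.~\ref{sec:prelim}, and no closedness or convexity is needed for it.

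For the inequality, I will compare the feasible sets in Definition~\ref{def:wr}. Let $S_k$ denote the set of $r\ge0$ for which some tuple $\mathcal{W}=\{W_i\}$ with every $W_i\in\cone^*(\F_k)$ satisfies $(\mathcal{M}+r\mathcal{W})/(1+r)\in\Mf$. Then $R_{\Mf}^{\F_k}(\mathcal{M})=\inf S_k$.

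Take any $r\in S_2$ with its witnessing tuple $\mathcal{W}$. By the first step, each $W_i$ also lies in $\cone^*(\F_1)$, and the mixture condition does not involve $\F$ at all. So the same pair $(r,\mathcal{W})$ is feasible for $\F_1$, which gives $S_2\subseteq S_1$.

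An infimum over a larger set can only be smaller, so $\inf S_1\le\inf S_2$. This also covers the degenerate case $S_2=\emptyset$, where $\inf S_2=+\infty$ and the inequality is trivial. There is no real obstacle here.

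As an alternative check, one could argue through Theorem~\ref{thm:main}. Ensembles of states from $\F_1$ are also ensembles from $\F_2$, so the left-hand side of \eqref{eq:main} maximizes over a smaller set for $\F_1$. The direct argument above is cleaner, however, and it avoids the theorem's hypotheses, such as nonemptiness of $\F_1$.
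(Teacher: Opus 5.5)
Your proof is correct and follows the same route as the paper. The paper likewise uses the inclusion reversal for dual cones (``the smaller the cone, the larger its dual'') and the fact that enlarging the feasible noise set in Definition~\ref{def:wr} cannot increase the infimum; you spell out both steps explicitly, including the empty-feasible-set case.
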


\begin{lemma}[Hierarchy in the free-measurement set]
\label{lem:mono2}
    If $\mathbb{M}_{\F_1}\supseteq\mathbb{M}_{\F_2}$, then
    \begin{equation}
        R_{\mathbb{M}_{\F_1}}^{\F}(\mathcal{M})\le R_{\mathbb{M}_{\F_2}}^{\F}(\mathcal{M}).
    \end{equation}
\end{lemma}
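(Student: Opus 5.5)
The argument is a direct feasible-set inclusion in Definition~\ref{def:wr}; I would not route it through Theorem~\ref{thm:main}. Only the target set of free measurements changes, while the noise set $[\cone^*(\F)]^{\times N}$ stays the same. So every pair $(r,\mathcal{W})$ admissible for the smaller free set $\mathbb{M}_{\F_2}$ is also admissible for the larger set $\mathbb{M}_{\F_1}$.

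Concretely, fix $r\ge0$ and $\mathcal{W}=\{W_i\}$ with $W_i\in\cone^*(\F)$ such that $\frac{\mathcal{M}+r\mathcal{W}}{1+r}\in\mathbb{M}_{\F_2}$. The hypothesis $\mathbb{M}_{\F_2}\subseteq\mathbb{M}_{\F_1}$ then gives $\frac{\mathcal{M}+r\mathcal{W}}{1+r}\in\mathbb{M}_{\F_1}$. Hence the set of admissible $r$ defining $R^{\F}_{\mathbb{M}_{\F_2}}(\mathcal{M})$ is contained in the set defining $R^{\F}_{\mathbb{M}_{\F_1}}(\mathcal{M})$. An infimum over a larger set is no larger, so
\begin{equation*}
R_{\mathbb{M}_{\F_1}}^{\F}(\mathcal{M})\le R_{\mathbb{M}_{\F_2}}^{\F}(\mathcal{M}).
\end{equation*}
The edge case of an empty feasible set for $\mathbb{M}_{\F_2}$ is handled by the convention $\inf\emptyset=+\infty$. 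In that case the inequality holds trivially.

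As a consistency check, I would also note the operational reading via Theorem~\ref{thm:main}. Enlarging the free measurement set can only increase $\max_{\mathcal{F}}P_{\text{succ}}(A_{\F},\mathcal{F})$ for every free-state ensemble. That lowers each ratio, and hence lowers the maximum. This reading requires both sets to satisfy the hypotheses of the theorem, so I would keep the direct argument as the proof. The argument has no step I expect to be difficult; the only point needing care is the convention for an empty feasible set.
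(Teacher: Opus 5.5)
Your proof is correct and matches the paper's argument: the paper likewise notes that enlarging the free-measurement set enlarges the feasible set in Definition~\ref{def:wr}, while the witness noise set $[\cone^*(\F)]^{\times N}$ stays the same, so the infimum cannot increase. Your $\inf\emptyset=+\infty$ remark and the operational cross-check via Theorem~\ref{thm:main} are harmless extras that the paper does not include.
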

\noindent The proof for both lemmas follows from the fact that a larger feasible set cannot increase the infimum.

Second, as a resource measure for measurements, it must be monotone under free transformations of the measurement.
\begin{lemma}[Monotonicity under free transformations]
\label{lem:monosim}
Let $\Phi$ be a linear map on $N$-tuples of elements of $V^*$ such that (i)~$\Phi(\Mf)\subseteq\Mf$ and (ii)~$\Phi(\mathcal{W})_i\in\cone^*(\F)$ for all $i$ whenever $W_j\in\cone^*(\F)$ for all $j$. Then, for every measurement $\mathcal{M}$ with $\Phi(\mathcal{M})\in\mathbb{M}$,
\begin{equation}
R_{\Mf}^{\F}\big(\Phi(\mathcal{M})\big)\le R_{\Mf}^{\F}(\mathcal{M}).
\end{equation}
\end{lemma}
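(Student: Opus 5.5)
We argue directly from Definition~\ref{def:wr}, taking an arbitrary feasible pair for $\mathcal{M}$ and pushing it through $\Phi$. Let $r\ge0$ and $\mathcal{W}=\{W_i\}$ with $W_i\in\cone^*(\F)$ be such that $\mathcal{F}:=(\mathcal{M}+r\mathcal{W})/(1+r)\in\Mf$. Since $\Phi$ is linear on $N$-tuples,
\begin{equation}
\Phi(\mathcal{F})=\frac{\Phi(\mathcal{M})+r\,\Phi(\mathcal{W})}{1+r}.
\end{equation}
Hypothesis (i) gives $\Phi(\mathcal{F})\in\Mf$. Hypothesis (ii) gives $\Phi(\mathcal{W})_i\in\cone^*(\F)$ for every $i$. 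Therefore the pair $(r,\Phi(\mathcal{W}))$ is feasible in the definition of $R_{\Mf}^{\F}(\Phi(\mathcal{M}))$. The assumption $\Phi(\mathcal{M})\in\mathbb{M}$ is needed only so that the left-hand side is defined, since Definition~\ref{def:wr} is stated for measurements.

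It follows that every $r$ in the feasible set for $\mathcal{M}$ also lies in the feasible set for $\Phi(\mathcal{M})$. An infimum over a larger set cannot be larger, which is the same principle used for Lemmas~\ref{lem:mono1} and~\ref{lem:mono2}. Hence $R_{\Mf}^{\F}(\Phi(\mathcal{M}))\le R_{\Mf}^{\F}(\mathcal{M})$.

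The one point to check is the case where the feasible set for $\mathcal{M}$ is empty, so that $R_{\Mf}^{\F}(\mathcal{M})=+\infty$ by convention. In that case the inequality holds trivially. The infimum is also never taken over the empty set for a non-trivial reason when $\Mf\neq\emptyset$: choosing $W_i=\lambda F_i-M_i$ with $\lambda$ large typically gives feasibility. Either way, the set inclusion above already settles the inequality, so no separate argument is needed. There is no substantive obstacle here; everything rests on linearity of $\Phi$ preserving the convex-combination form.
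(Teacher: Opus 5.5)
Your proof is correct and is essentially the paper's argument: you push any feasible pair $(r,\mathcal{W})$ through $\Phi$, use linearity with (i) and (ii) to show that $(r,\Phi(\mathcal{W}))$ is feasible for $\Phi(\mathcal{M})$, and then compare infima. Your aside that a large $\lambda$ ``typically'' gives feasibility does not hold in general without an extra assumption such as $U/N\in\Mf$, but as you note it plays no role in the argument.
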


\begin{proof}
Let $(r,\mathcal{W})$ be feasible for $\mathcal{M}$ in Definition~\ref{def:wr}, i.e.\ $W_i\in\cone^*(\F)$ and $(\mathcal{M}+r\mathcal{W})/(1+r)\in\Mf$. By linearity, $\big(\Phi(\mathcal{M})+r\,\Phi(\mathcal{W})\big)/(1+r) =\Phi\big((\mathcal{M}+r\mathcal{W})/(1+r)\big)\in\Mf$ by~(i), and $\Phi(\mathcal{W})$ is again a witness tuple by~(ii). Hence every $r$ feasible for $\mathcal{M}$ is feasible for $\Phi(\mathcal{M})$, and taking the infimum proves the claim.
\end{proof}

\begin{corollary}[Post- and pre-processing]
\label{cor:procs}
Let $\mathcal{M}=\{M_i\}_{i=1}^N$ be a measurement. (1)~If $q(j|i)\ge0$, $\sum_jq(j|i)=1$, and $\{\sum_iq(j|i)F_i\}_j\in\Mf$ for every $\mathcal{F}=\{F_i\}_i\in\Mf$, then, for $\mathcal{M}'=\{\sum_iq(j|i)M_i\}_j$, we have $R_{\Mf}^{\F}(\mathcal{M}')\le R_{\Mf}^{\F}(\mathcal{M})$. (2)~If $\Lambda\in\mathbb{T}(V,V)$ is a channel with $\Lambda(\F)\subseteq\F$ and $\Lambda^*(\Mf)\subseteq\Mf$, where $\Lambda^*$ denotes the adjoint, then $R_{\Mf}^{\F}\bigl(\{\Lambda^*(M_i)\}_i\bigr)\le R_{\Mf}^{\F}(\mathcal{M})$.
\end{corollary}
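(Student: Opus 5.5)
Both parts will follow by exhibiting a linear map $\Phi$ on $N$-tuples and checking hypotheses (i) and (ii) of Lemma~\ref{lem:monosim}. That lemma then gives the inequality directly.

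For part (1), take $\Phi(\mathcal{W})_j:=\sum_i q(j|i)W_i$. This is plainly linear, and $\Phi(\mathcal{M})=\mathcal{M}'$. Hypothesis (i) is exactly the assumption that classical post-processing by $q$ maps $\Mf$ into $\Mf$. For (ii), $\cone^*(\F)$ is a convex cone, so it is closed under nonnegative combinations; hence $W_i\in\cone^*(\F)$ for all $i$ together with $q(j|i)\ge0$ gives $\sum_iq(j|i)W_i\in\cone^*(\F)$. It remains to check $\Phi(\mathcal{M})\in\mathbb{M}$. Each $\sum_iq(j|i)M_i$ is a nonnegative combination of effects, so it is nonnegative on every state. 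Column-stochasticity gives $\sum_j\sum_iq(j|i)M_i=\sum_iM_i=U$, which forces each component to be at most $U$. A minor point: if the output has a different number of outcomes $N'$, Lemma~\ref{lem:monosim} is applied with $\Phi$ mapping $N$-tuples to $N'$-tuples. The proof of that lemma only uses linearity and (i)--(ii), so it goes through verbatim with $\Mf$ understood on each side.

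For part (2), take $\Phi(\mathcal{W})_i:=\Lambda^*(W_i)$ componentwise. This is linear because the adjoint is linear. Hypothesis (i) is the assumption $\Lambda^*(\Mf)\subseteq\Mf$. For (ii), let $W\in\cone^*(\F)$ and $\sigma\in\F$. Then $\langle\Lambda^*(W),\sigma\rangle=\langle W,\Lambda(\sigma)\rangle\ge0$ because $\Lambda(\sigma)\in\F$, so $\Lambda^*(W)\in\cone^*(\F)$. Finally, $\{\Lambda^*(M_i)\}_i\in\mathbb{M}$ holds because $\Lambda$ is a channel. Positivity gives $\langle\Lambda^*(M_i),\rho\rangle=\langle M_i,\Lambda(\rho)\rangle\in[0,1]$ for every state $\rho$. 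Unitality of the adjoint on the deterministic effect, $\Lambda^*(U)=U$ (trace preservation), gives $\sum_i\Lambda^*(M_i)=U$.

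The argument is essentially bookkeeping. The only step needing care is the verification of (ii) in part (2), i.e.\ the duality $\langle\Lambda^*W,\sigma\rangle=\langle W,\Lambda\sigma\rangle$ combined with $\Lambda(\F)\subseteq\F$. This is where the free-state-preserving assumption enters, and it is what distinguishes this monotonicity from the generalized-robustness case.
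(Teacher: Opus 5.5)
Your proposal is correct and follows the paper's proof: both reduce the corollary to Lemma~\ref{lem:monosim} by checking linearity, hypothesis (i) from the assumptions, hypothesis (ii) via closure of $\cone^*(\F)$ under nonnegative combinations and the duality $\langle\Lambda^*(W),\sigma\rangle=\langle W,\Lambda(\sigma)\rangle$ with $\Lambda(\F)\subseteq\F$, and finally that the processed tuples are valid measurements. Your extra remarks, on positivity of the processed effects and on a possibly different number of outcomes, only spell out points the paper leaves implicit.
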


\begin{proof}
Both maps are linear, satisfy~(i) by assumption, and return valid measurements, since $\sum_jM'_j=\sum_iM_i=U$ and $\sum_i\Lambda^*(M_i)=\Lambda^*(U)=U$. For~(ii), in case~(1), $\sum_iq(j|i)W_i\in\cone^*(\F)$ because the dual cone is closed under nonnegative combinations; in case~(2), for every $\sigma\in\F$, $\langle\Lambda^*(W_i),\sigma\rangle=\langle W_i,\Lambda(\sigma)\rangle\ge0$ as $\Lambda(\sigma)\in\F$, so $\Lambda^*(W_i)\in\cone^*(\F)$. Both parts follow from Lemma~\ref{lem:monosim}.
\end{proof}

Note that, maps satisfying (i)--(ii) are closed under composition and, by convexity of $\Mf$ and $\cone^*(\F)$, under convex mixtures; Lemma~\ref{lem:monosim} therefore covers arbitrary free simulations---probabilistic combinations of free pre-processings followed by classical post-processings~\cite{Guerini2017,OszmaniecBiswas2019}. 

Similar to generalized and standard robustness, witness robustness also has convexity.
\begin{lemma}[Convexity]
\label{lem:convex}
$R_{\Mf}^{\F}(\mathcal{M})$ is convex in $\mathcal{M}$.
\end{lemma}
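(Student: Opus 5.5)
The plan is a direct construction from the definition, in the same spirit as the usual convexity proof for generalized robustness. Fix two $N$-outcome measurements $\mathcal{M}^{(1)},\mathcal{M}^{(2)}$ and $t\in[0,1]$, and set $\mathcal{M}=t\mathcal{M}^{(1)}+(1-t)\mathcal{M}^{(2)}$. It will be convenient to work with the equivalent unnormalized form already used in the proof of Theorem~\ref{thm:main}. There, $(r,\mathcal{W})$ is feasible exactly when $\mathcal{M}+r\mathcal{W}=(1+r)\mathcal{F}$ for some $\mathcal{F}\in\Mf$ and some witness tuple $\mathcal{W}$.

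First fix $\epsilon>0$ and choose feasible pairs $(r_k,\mathcal{W}^{(k)})$ for $\mathcal{M}^{(k)}$ with $r_k\le R_{\Mf}^{\F}(\mathcal{M}^{(k)})+\epsilon$. Write
\[
\mathcal{M}^{(k)}+r_k\mathcal{W}^{(k)}=(1+r_k)\mathcal{F}^{(k)},\qquad \mathcal{F}^{(k)}\in\Mf .
\]
If some $\mathcal{M}^{(k)}$ is infeasible its robustness is $+\infty$, and the inequality is trivial. Next define $r:=tr_1+(1-t)r_2$. Taking the convex combination of the two identities gives
\[
\mathcal{M}+t r_1\mathcal{W}^{(1)}+(1-t)r_2\mathcal{W}^{(2)}=t(1+r_1)\mathcal{F}^{(1)}+(1-t)(1+r_2)\mathcal{F}^{(2)}.
\]

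The right-hand side equals $(1+r)\mathcal{F}$, where
\[
\mathcal{F}:=\frac{t(1+r_1)\mathcal{F}^{(1)}+(1-t)(1+r_2)\mathcal{F}^{(2)}}{1+r}.
\]
The weights of this combination are nonnegative and sum to one, so $\mathcal{F}\in\Mf$ by convexity of $\Mf$. On the left-hand side we need a witness tuple $\mathcal{W}$ with $r\mathcal{W}=t r_1\mathcal{W}^{(1)}+(1-t)r_2\mathcal{W}^{(2)}$. When $r>0$ we take
\[
\mathcal{W}=\frac{t r_1\mathcal{W}^{(1)}+(1-t)r_2\mathcal{W}^{(2)}}{r}.
\]
Each component of this $\mathcal{W}$ is a nonnegative combination of elements of $\cone^*(\F)$, hence lies in $\cone^*(\F)$. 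So $(r,\mathcal{W})$ is feasible for $\mathcal{M}$.

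The only delicate step is the degenerate case $r=0$. This is harmless: $r=0$ forces every term with positive weight $t$ or $1-t$ to have $r_k=0$, so the corresponding $\mathcal{M}^{(k)}$ lie in $\Mf$. Then $\mathcal{M}\in\Mf$ by convexity, and $\mathcal{M}$ is feasible with $r=0$ and any $\mathcal{W}$, for instance $\mathcal{W}=0$.

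In all cases this gives $R_{\Mf}^{\F}(\mathcal{M})\le tR_{\Mf}^{\F}(\mathcal{M}^{(1)})+(1-t)R_{\Mf}^{\F}(\mathcal{M}^{(2)})+\epsilon$. Letting $\epsilon\to0$ yields convexity.
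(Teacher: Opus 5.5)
Your proof is correct, but it takes a different route from the paper. The paper proves convexity through the operational characterization of Theorem~\ref{thm:main}. It writes $1+R_{\Mf}^{\F}(\mathcal{M})=\max_{\omega\in\Omega_{\F}}\bigl(\langle\mathcal{M},\omega\rangle/\max_{\mathcal{F}\in\Mf}\langle\mathcal{F},\omega\rangle\bigr)$ and fixes an optimal $\omega^*$ for the mixture. It then splits the numerator by linearity and bounds each piece by $p_k\bigl(1+R_{\Mf}^{\F}(\mathcal{M}_k)\bigr)$; in effect, $1+R$ is a pointwise maximum of functions linear in $\mathcal{M}$. You instead stay in the primal definition: you mix $\epsilon$-optimal feasible decompositions $\mathcal{M}^{(k)}+r_k\mathcal{W}^{(k)}=(1+r_k)\mathcal{F}^{(k)}$ and renormalize to get a feasible pair for the mixture. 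Your route is self-contained and needs only two facts: $\Mf$ is convex and $\cone^*(\F)$ is a convex cone. It therefore avoids Sion's minimax theorem, compactness, and attainment of the infimum. It also avoids the implicit requirement that $\max_{\mathcal{F}\in\Mf}\langle\mathcal{F},\omega\rangle>0$, which Theorem~\ref{thm:main} needs for its ratio to be well defined. It handles the infeasible case $R=+\infty$ explicitly as well. Your degenerate case $r=0$ is even simpler than you make it: the combined identity then reads $\mathcal{M}=\mathcal{F}\in\Mf$ directly. The paper's route is shorter once Theorem~\ref{thm:main} is in hand, and it makes convexity conceptually transparent as a supremum of linear functionals.
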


\begin{proof}
Using the operational form of Theorem~\ref{thm:main},
\begin{equation}
1+R_{\Mf}^{\F}(\mathcal{M})=\max_{\omega\in\Omega_{\F}}
\frac{\langle\mathcal{M},\omega\rangle}{\max_{\mathcal{F}\in\Mf}\langle\mathcal{F},\omega\rangle}.
\end{equation}
Let $\omega^*$ be the optimal witness for the convex combination $p_1\mathcal{M}_1+p_2\mathcal{M}_2$ with $p_1+p_2=1$. Substituting $\omega^*$ directly removes the outer maximisation and bounds each term separately:
\begin{align}
1+R_{\Mf}^{\F}(p_1\mathcal{M}_1+p_2\mathcal{M}_2)
&= \frac{p_1\langle\mathcal{M}_1,\omega^*\rangle}{\max_{\mathcal{F}\in\Mf}\langle\mathcal{F},\omega^*\rangle}
 + \frac{p_2\langle\mathcal{M}_2,\omega^*\rangle}{\max_{\mathcal{F}\in\Mf}\langle\mathcal{F},\omega^*\rangle}\nonumber\\
&\le p_1\big(1+R_{\Mf}^{\F}(\mathcal{M}_1)\big)+p_2\big(1+R_{\Mf}^{\F}(\mathcal{M}_2)\big)\nonumber\\
&= p_1+p_2 +p_1 R_{\Mf}^{\F}(\mathcal{M}_1)+p_2 R_{\Mf}^{\F}(\mathcal{M}_2).
\end{align}
Since $p_1+p_2=1$, this yields the desired inequality
\begin{equation}
R_{\Mf}^{\F}(p_1\mathcal{M}_1+p_2\mathcal{M}_2)\le
p_1 R_{\Mf}^{\F}(\mathcal{M}_1)+p_2 R_{\Mf}^{\F}(\mathcal{M}_2).
\end{equation}
\end{proof}

\begin{lemma}[Universal upper bound]
\label{lem:bound}
For any $N$-outcome measurement $\mathcal{M}=\{M_i\}_{i=1}^N$, since the trivial measurement $F_i=U/N$ is free ($\mathcal{F}\in\Mf$), then
\begin{equation}
R_{\Mf}^{\F}(\mathcal{M})\le N-1.
\end{equation}
\end{lemma}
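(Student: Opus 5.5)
The plan is to exhibit an explicit feasible point in Definition~\ref{def:wr} with $r=N-1$, using the trivial free measurement $F_i=U/N$. A second, operational argument via Theorem~\ref{thm:main} serves as a cross-check.

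\textbf{Direct construction.} Set $r:=N-1$ and define the noise tuple
\begin{equation}
W_i:=\frac{N F_i - M_i}{N-1}=\frac{U-M_i}{N-1},
\end{equation}
which is well defined for $N\ge2$. For every $\sigma\in\F\subseteq\Omega(V)$ we have
\begin{equation}
\langle U-M_i,\sigma\rangle=1-\langle M_i,\sigma\rangle\ge0,
\end{equation}
because $M_i$ is an effect. Hence $W_i\in\cone^*(\F)$; in fact $W_i\in\cone^*(\Omega(V))$.

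It remains to check membership in $\Mf$. By construction,
\begin{equation}
\frac{\mathcal{M}+(N-1)\mathcal{W}}{N}=\Bigl\{\tfrac{M_i+U-M_i}{N}\Bigr\}_i=\{U/N\}_i,
\end{equation}
and this is free by assumption. So $r=N-1$ is feasible, and the infimum satisfies $R_{\Mf}^{\F}(\mathcal{M})\le N-1$.

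The case $N=1$ is trivial. There $\mathcal{M}=\{U\}$ coincides with the trivial free measurement, so $r=0$ is feasible.

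\textbf{Operational check.} For any free-state ensemble, the trivial measurement achieves
\begin{equation}
P_{\text{succ}}(A_\F,\{U/N\})=\sum_i p_i/N=1/N.
\end{equation}
Any measurement satisfies $P_{\text{succ}}(A_\F,\mathcal{M})\le1$. The ratio in Theorem~\ref{thm:main} is therefore at most $N$, which again gives $1+R\le N$.

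\textbf{Main obstacle.} The only nontrivial point is that $U/N$ really lies in $\Mf$, and the statement takes this as a hypothesis. Everything else is a one-line verification, so no step is expected to be hard.
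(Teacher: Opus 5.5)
Your proof is correct and follows the paper's argument: both use the trivial free measurement $F_i=U/N$ with $r=N-1$, so that the required witness is $U-M_i$ (up to the factor $1/(N-1)$), which is nonnegative on every free state. Your justification via $\langle U-M_i,\sigma\rangle=1-\langle M_i,\sigma\rangle\ge0$ is the general-probabilistic-theory version of the paper's positive-semidefinite containment argument. The separate treatment of $N=1$ and the operational cross-check through Theorem~\ref{thm:main} are harmless additions.
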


\begin{proof}
Choose the trivial measurement $F_i=\tfrac1NU$ and set $r=N-1$. The witness condition requires $(1+r)F_i-M_i\in\cone^*(\F)$. Evaluating directly,
\begin{equation}
(1+r)F_i-M_i = N\big(\tfrac1NU\big)-M_i = U-M_i.
\end{equation}
Since $\mathcal{M}$ is a valid POVM, $U-M_i\ge 0$. Because every free state $\sigma\ge 0$, the positive semidefinite cone is contained in the dual cone, $\mathrm{PSD}\subseteq\cone^*(\F)$; hence $U-M_i\in\cone^*(\F)$ holds for all $i$.
\end{proof}

\begin{remark}[Saturation and quantum data hiding]
When does a measurement reach the bound $R_{\Mf}^{\F}(\mathcal{M})=N-1$? By Theorem~\ref{thm:main} the ratio $P_{\text{succ}}(\mathcal{M})/P_{\text{succ}}(\mathcal{F})$ must reach $N$. As $P_{\text{succ}}(\mathcal{M})\le 1$ and $P_{\text{succ}}(\mathcal{F})\ge 1/N$, saturation occurs if and only if there is a free-state ensemble $A_{\F}^*$ such that
\begin{enumerate}
\item $P_{\text{succ}}(A_{\F}^*,\mathcal{M})=1$ (perfect discrimination by $\mathcal{M}$);
\item $\max_{\mathcal{F}\in\Mf}P_{\text{succ}}(A_{\F}^*,\mathcal{F})=1/N$ (free measurements do no better than random guessing).
\end{enumerate}
This gives a strict operational link between maximal witness robustness and quantum data hiding~\cite{DiVincenzo2002}: achieving the bound requires orthogonal free states whose classical label is completely inaccessible to any free measurement.
\end{remark}

\begin{lemma}[Necessary and sufficient condition for $R^{\F}_{\Mf}(\mathcal{M})=0$]
\label{lem:nc0}
\begin{equation}
R^{\F}_{\Mf}(\mathcal{M})=0 \iff \exists\,\mathcal{F}\in\Mf:\ \forall\sigma\in\F,\ \langle F_i,\sigma\rangle=\langle M_i,\sigma\rangle.
\end{equation}
That is, the witness robustness of $\mathcal{M}$ vanishes if and only if there exists a free measurement $\mathcal{F}$ such that, on every free state, the two measurements produce the same outcome distribution.
\end{lemma}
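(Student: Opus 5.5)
The plan is to use the reformulation obtained inside the proof of Theorem~\ref{thm:main}: $1+R^{\F}_{\Mf}(\mathcal{M})$ equals $\inf\{\lambda\ge1 \mid \exists\mathcal{F}\in\Mf,\ \lambda F_i-M_i\in\cone^*(\F)\ \forall i\}$. By the same theorem it also equals a maximum over the compact set $\Omega_{\F}$, so the value $1$ is attained exactly when the left-hand quantity equals $1$. I will prove the two directions separately.

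For the direction ($\Leftarrow$), I assume a free $\mathcal{F}$ agrees with $\mathcal{M}$ on all free states. Then $\langle F_i-M_i,\sigma\rangle=0\ge0$ for every $\sigma\in\F$, so $F_i-M_i\in\cone^*(\F)$. This makes $\lambda=1$ feasible, the infimum is $1$, and hence $R^{\F}_{\Mf}(\mathcal{M})=0$. Equivalently, for any ensemble, $P_{\text{succ}}(A_{\F},\mathcal{M})=P_{\text{succ}}(A_{\F},\mathcal{F})$ is at most the free optimum, so the ratio in~\eqref{eq:main} is at most $1$.

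For the direction ($\Rightarrow$), the key step is to show that the infimum over $\lambda$ is attained, in particular at $\lambda=1$. I would argue by compactness. Take $\lambda_n\downarrow1$ with feasible $\mathcal{F}^{(n)}\in\Mf$. Since $\Mf$ is closed and bounded, a subsequence converges to some $\mathcal{F}\in\Mf$. Since $\cone^*(\F)$ is closed (it is an intersection of closed half-spaces), the limit satisfies $F_i-M_i\in\cone^*(\F)$ for all $i$. Alternatively, one can avoid limits by working with the min-max value directly: $\min_{\mathcal{F}}\max_\omega f(\omega,\mathcal{F};1)\le0$, and the minimum is attained by compactness. Either way, we obtain $\langle F_i,\sigma\rangle\ge\langle M_i,\sigma\rangle$ for all $i$ and all $\sigma\in\F$.

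The final step upgrades these inequalities to equalities by normalization. Summing over $i$ gives $\sum_i\langle F_i-M_i,\sigma\rangle=\langle U-U,\sigma\rangle=0$. This is a sum of nonnegative terms equal to zero, so each term vanishes, which gives $\langle F_i,\sigma\rangle=\langle M_i,\sigma\rangle$ for all $i$ and all $\sigma\in\F$. The main obstacle is the attainment step, since Definition~\ref{def:wr} is only an infimum; the argument rests on closedness of $\Mf$ and of the dual cone, both of which the hypotheses supply.
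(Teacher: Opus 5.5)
Your proof is correct and follows the same route as the paper. Both reduce $R^{\F}_{\Mf}(\mathcal{M})=0$ to the existence of $\mathcal{F}\in\Mf$ with $F_i-M_i\in\cone^*(\F)$ for all $i$, and then pass to equality on free states. Your version is more careful than the paper's two-line argument, which silently assumes the infimum is attained and leaves the normalization step implicit: you justify attainment via compactness of $\Mf$ and closedness of $\cone^*(\F)$, and you sum over $i$ explicitly using $\sum_i F_i=\sum_i M_i=U$.
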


\begin{proof}
$R^{\F}_{\Mf}(\mathcal{M})=0 \iff \exists\,\mathcal{F}\in\Mf$ with $F_i-M_i\in\cone^*(\F)$, which gives
\begin{equation}
\forall\sigma\in\F:\ \langle F_i,\sigma\rangle=\langle M_i,\sigma\rangle.
\end{equation}
The converse is immediate.
\end{proof}

\begin{remark}[Unfaithful in general]
\label{cor:unfaith}
This condition is in general strictly weaker than the existence of a free measurement $\mathcal{F}$ with $\mathcal{F}=\mathcal{M}$ (non-faithful in general). To recover faithfulness, an additional assumption is needed.
\end{remark}

\section{Faithfulness and the interior condition}
\label{sec:faithful}

While the witness robustness is generically unfaithful (Remark~\ref{cor:unfaith}), faithfulness is recovered if the free-state set $\F$ possesses an interior.

\begin{theorem}[Recovery of faithfulness]
\label{thm:faithful}
If $\F$ has non-empty interior relative to $\Omega(V)$, then $R_{\Mf}^{\F}(\mathcal{M})$ is faithful.
\end{theorem}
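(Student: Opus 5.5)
Faithfulness means $R_{\Mf}^{\F}(\mathcal{M})=0 \iff \mathcal{M}\in\Mf$. The direction ``$\Leftarrow$'' is immediate: take $r=0$ in Definition~\ref{def:wr}. For ``$\Rightarrow$'' I will start from Lemma~\ref{lem:nc0}. If $R_{\Mf}^{\F}(\mathcal{M})=0$, there is a free measurement $\mathcal{F}\in\Mf$ with $\langle F_i,\sigma\rangle=\langle M_i,\sigma\rangle$ for all $\sigma\in\F$ and all $i$.

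One point needs care. Lemma~\ref{lem:nc0} is stated with the infimum attained. To justify this, take a sequence of feasible $r_n\to0$ with free measurements $F^{(n)}$ satisfying $(1+r_n)F^{(n)}_i-M_i\in\cone^*(\F)$. Since $\Mf$ is compact, $F^{(n)}$ has a convergent subsequence with limit $\mathcal{F}\in\Mf$. Since $\cone^*(\F)$ is closed, we get $F_i-M_i\in\cone^*(\F)$. Summing over $i$ gives $\sum_i(F_i-M_i)=U-U=0$, which forces $\langle F_i-M_i,\sigma\rangle=0$ for every $\sigma\in\F$: these are nonnegative numbers summing to zero. I will include this argument explicitly.

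The core step is a linear-algebra argument. Set $D_i:=F_i-M_i\in V^*$. Then $D_i$ vanishes on $\F$, hence on $\mathrm{span}(\F)$. I claim that if $\F$ has non-empty interior relative to $\Omega(V)$, then $\mathrm{span}(\F)=\mathrm{span}(\Omega(V))$, and that the latter is all of $V$.

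For the first equality, let $\sigma_0$ be a relative interior point of $\F$ in $\Omega(V)$. For any state $\rho\in\Omega(V)$, the point $\sigma_0+\epsilon(\rho-\sigma_0)$ lies in $\F$ for small $\epsilon>0$. Hence $\rho=\sigma_0+\epsilon^{-1}\big[(\sigma_0+\epsilon(\rho-\sigma_0))-\sigma_0\big]\in\mathrm{span}(\F)$.

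For the second equality, in the quantum case density operators span $\mathrm{Herm}(\mathbb{H})$. In a general GPT it is standard to assume that $\Omega(V)$ spans $V$; otherwise effects are only defined modulo the annihilator of $\mathrm{span}(\Omega(V))$, and equality of effects is understood on states. Either way $D_i=0$ as a functional on states, so $M_i=F_i$ and $\mathcal{M}\in\Mf$.

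The main obstacle is pinning down precisely what ``interior relative to $\Omega(V)$'' means. If it meant the relative interior of $\F$ within its own affine hull, the claim would fail: that notion is trivial for every convex set, and theories with resource-destroying maps are unfaithful. So I will explicitly use the topology of $\Omega(V)$ as a subset of $\mathrm{aff}(\Omega(V))$. That is exactly what makes the segment-extension argument above work. With this reading fixed, everything else is routine.
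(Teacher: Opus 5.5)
Your proof is correct and follows essentially the same route as the paper. Both arguments reduce via Lemma~\ref{lem:nc0} to $F_i-M_i\in\cone^*(\F)$, use $\sum_i(F_i-M_i)=0$, and conclude from the fact that $\F$ spans $V$. The paper phrases the last step as pointedness of $\cone^*(\F)$, i.e.\ $\cone^*(\F)\cap-\cone^*(\F)=\{0\}$, which is exactly your statement that a functional vanishing on $\F$ vanishes on $\mathrm{span}(\F)=V$. Your compactness argument for attainment of the infimum and your segment-extension proof that $\mathrm{span}(\F)=\mathrm{span}(\Omega(V))$ fill in two steps the paper only asserts.
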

\begin{proof}
By Lemma~\ref{lem:nc0}, $R_{\Mf}^{\F}(\mathcal{M})=0$ is equivalent to the existence of a free measurement $\mathcal{F}\in\Mf$ with
\begin{equation}
F_i-M_i\in\cone^*(\F),\qquad\forall i.
\end{equation}
Next we use the geometry of the free state sets. Since $\Omega(V)$ spans $V$, a set with non-empty interior relative to $\Omega(V)$ spans $V$ as well; hence $\cone(\F)$ is full-dimensional and its dual cone, $\cone^*(\F)$, is \emph{pointed}~\cite[Exercise 2.31 (e)]{boyd2004convex}, which by definition~\cite[\S 2.4.1]{boyd2004convex} means
\begin{equation}
\cone^*(\F)\cap-\cone^*(\F)=\{0\},
\end{equation}
where $-\cone^*(\F):=\{-X \mid X\in\cone^*(\F)\}$. Let $X_i=F_i-M_i$. Since both $\mathcal{F}$ and $\mathcal{M}$ are valid measurements, summing over all outcomes gives
\begin{equation}
\sum_i X_i=\sum_i F_i-\sum_i M_i=U-U=0,
\end{equation}
with $U$ the unit effect. For any fixed index $j$ we may therefore isolate
\begin{equation}
-X_j=\sum_{i\neq j}X_i.
\end{equation}
Because the dual cone is closed under addition, $\sum_{i\neq j}X_i\in\cone^*(\F)$, so $-X_j\in\cone^*(\F)$. Together with $X_j\in\cone^*(\F)$ this places $X_j$ in the intersection of the cone and its negative,
\begin{equation}
X_j\in\cone^*(\F)\cap-\cone^*(\F)\implies X_j=0.
\end{equation}
Hence $F_i=M_i$ for all $i$, i.e.\ $\mathcal{M}=\mathcal{F}\in\Mf$. The converse is immediate.
\end{proof}

As Table~\ref{tab:interior} shows, the interior condition holds in the workhorse theories of entanglement, magic, nonlocality, steering, and contextuality, where the witness robustness is therefore faithful; the theories whose free states form a lower-dimensional slice---coherence, asymmetry, and imaginarity---fall outside this guarantee, foreshadowing the next section.

\begin{table}[t]
\centering
\caption{Free-state sets with and without interior across resource theories.}
\label{tab:interior}
\begin{tabular}{@{}llc@{}}
\toprule
Resource theory & Free-state set ($\F$) & Interior? \\
\midrule
Entanglement & Separable / PPT states & Yes \\
Magic (non-stabilizerness) & Stabilizer polytope & Yes \\
Wigner negativity & Wigner-positive states & Yes \\
Bell nonlocality & Local hidden-variable models & Yes \\
Quantum steering & Local hidden-state models & Yes \\
Contextuality & Noncontextual models & Yes \\
\midrule
Coherence & Incoherent (diagonal) states & No \\
Asymmetry & Symmetric states & No \\
Imaginarity & Real states ($\rho=\rho^*$) & No \\
\bottomrule
\end{tabular}
\end{table}

\section{Resource-destroying maps imply zero witness robustness}
\label{sec:rdm}

In resource theories admitting a resource-destroying map~\cite{Liu2017}, the witness robustness vanishes identically for every measurement, and is thus manifestly non-faithful.

\begin{theorem}[Vanishing under resource-destroying maps]
\label{thm:rdm}
Suppose there exists a linear map $\Lambda:V\to V$ with the properties
\begin{enumerate}
\item state-preserving: $\Lambda\big(\Omega(V)\big)\subseteq\Omega(V)$ (for quantum systems: $\Lambda$ is positive and trace-preserving);
\item resource-destroying: $\Lambda(\rho)\in\F$ for all $\rho\in\Omega(V)$;
\item freezing free states: $\Lambda(\sigma)=\sigma$ for all $\sigma\in\F$.
\end{enumerate}
If, in addition, its adjoint $\Lambda^{*}$ is also a resource destroying map for any measurement,
\begin{equation}
\forall\,\mathcal{M}\in\mathbb{M}:\quad \{\Lambda^{*}(M_i)\}_i\in\Mf,
\end{equation}
then the witness robustness of every measurement vanishes,
$R_{\Mf}^{\F}(\mathcal{M})=0$ for all $\mathcal{M}\in\mathbb{M}$.
\end{theorem}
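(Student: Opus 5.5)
The plan is to use Lemma~\ref{lem:nc0}: $R_{\Mf}^{\F}(\mathcal{M})=0$ holds if and only if some free measurement reproduces the outcome statistics of $\mathcal{M}$ on every free state. The natural candidate is $\mathcal{F}:=\{\Lambda^{*}(M_i)\}_i$. By the additional hypothesis on the adjoint, $\mathcal{F}\in\Mf$ for every $\mathcal{M}\in\mathbb{M}$, so only the statistics condition remains to be checked.

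Fix $\sigma\in\F$ and an outcome $i$. Using the defining property of the adjoint and then the freezing property (iii), we get
\begin{equation}
\langle F_i,\sigma\rangle=\langle\Lambda^{*}(M_i),\sigma\rangle=\langle M_i,\Lambda(\sigma)\rangle=\langle M_i,\sigma\rangle .
\end{equation}
Hence $F_i-M_i$ vanishes on $\F$ and lies in $\cone^*(\F)$. Lemma~\ref{lem:nc0} then gives $R_{\Mf}^{\F}(\mathcal{M})=0$.

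As an alternative route, I would argue directly from Definition~\ref{def:wr} with $r=0$, and cross-check against Theorem~\ref{thm:main}. For any free ensemble, $P_{\text{succ}}(A_{\F},\mathcal{F})=P_{\text{succ}}(A_{\F},\mathcal{M})$, so the ratio in \eqref{eq:main} is at most $1$. It is also at least $1$, because $\mathcal{F}$ is itself a free measurement. The ratio therefore equals $1$, which gives $R=0$.

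The only delicate point is bookkeeping rather than mathematics. Properties (i) and (ii) are not needed for the vanishing itself; they only guarantee that $\Lambda^{*}$ maps effects to effects and preserves the unit, $\Lambda^{*}(U)=U$, so that $\{\Lambda^{*}(M_i)\}$ is a genuine measurement. That is already implicit in the hypothesis $\{\Lambda^{*}(M_i)\}\in\Mf$, and I would note it in a remark. I expect no real obstacle; the key step is simply recognising that the freezing property makes $\Lambda^{*}$ invisible on $\F$.
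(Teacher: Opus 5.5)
Your main argument is correct and is exactly the paper's proof: take $F_i=\Lambda^{*}(M_i)$, which is free by hypothesis, use the adjoint relation and the freezing property (iii) to get $\langle F_i-M_i,\sigma\rangle=0$ for all $\sigma\in\F$, and conclude via Lemma~\ref{lem:nc0}. One small slip in your optional cross-check via Theorem~\ref{thm:main}: freeness of $\mathcal{F}$ gives the upper bound on the ratio ($\le 1$), whereas the lower bound $1+R_{\Mf}^{\F}(\mathcal{M})\ge 1$ comes from $r\ge 0$ in Definition~\ref{def:wr}.
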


\begin{proof}
For a given $\mathcal{M}$, pick the free measurement constructed by the dual map,
\begin{equation}
F_i=\Lambda^*(M_i),
\end{equation}
which lies in $\Mf$ by assumption. For any free state $\sigma\in\F$,
\begin{align}
\langle F_i-M_i,\sigma\rangle
&= \langle\Lambda^*(M_i)-M_i,\sigma\rangle\nonumber\\
&= \langle M_i,\Lambda(\sigma)\rangle-\langle M_i,\sigma\rangle.
\end{align}
By property (3), $\Lambda(\sigma)=\sigma$ for all $\sigma\in\F$, so
\begin{equation}
\langle M_i,\sigma\rangle-\langle M_i,\sigma\rangle=0.
\end{equation}
Since this holds for every $\sigma\in\F$, it follows that for all $i$
\begin{equation}
F_i-M_i\in\cone^*(\F).
\end{equation}
By using lemma \ref{lem:nc0}, $R_{\Mf}^{\F}(\mathcal{M})=0$ for all $\mathcal{M}$.
\end{proof}

\begin{table}[t]
\centering
\caption{Resource-destroying maps and the induced free measurements.}
\label{tab:rdm}
\small
\setlength{\tabcolsep}{4pt}
\begin{tabular}{@{}lll@{}}
\toprule
Resource theory & Destroying map $\Lambda$ & Induced free measurement \\
\midrule
Coherence & Full dephasing $\Delta(\rho)$ & Diagonal POVMs \\
Asymmetry & Group twirling $\mathcal{G}(\rho)$ & Covariant \\
Imaginarity & $\rho\mapsto\tfrac12(\rho+\rho^*)$ & Real POVMs \\
\bottomrule
\end{tabular}
\end{table}

These hypotheses hold in many standard resource theories. Table~\ref{tab:rdm} lists representative resource-destroying maps $\Lambda$ together with the free measurements induced by their duals $\Lambda^\ast$; in each such theory the witness robustness vanishes for every measurement, $R_{\mathbb{M}_{\mathbb{F}}}^{\mathbb{F}}(\mathcal{M})=0$.

\section{Analytical examples}
\label{sec:examples}

\subsection{Magic (single qubit)}
\label{sec:magic}

To build intuition, we evaluate the witness robustness for a single-qubit toy model with stabilizer states and classically mixed stabilizer
measurements~\cite{Veitch2014,Howard2017,Heinrich2019,BravyiGosset2016}. We write Hermitian operators in the Pauli basis as $A=a_0\mathbb{I}+\vec a\cdot\vec\sigma$, denoted compactly $(a_0,\vec a)$; the trace inner product is
\begin{equation}
\langle(a_0,\vec a),(b_0,\vec b)\rangle
= \mathrm{Tr}\!\left[(a_0\mathbb{I}+\vec a\cdot\vec\sigma)(b_0\mathbb{I}+\vec b\cdot\vec\sigma)\right]
= 2(a_0 b_0+\vec a\cdot\vec b).
\end{equation}

\paragraph{Free states.}
The single-qubit stabilizer states form an octahedron inside the Bloch sphere~\cite{Veitch2014},
\begin{equation}
\mathrm{Stab} := \left\{ \tfrac12(\mathbb{I}+\vec r\cdot\vec\sigma) \;\middle|\;
\|\vec r\|_1=|r_x|+|r_y|+|r_z|\le 1 \right\}.
\end{equation}

\paragraph{Free measurements.}
The free measurements are the POVMs whose effects are conical combinations of stabilizer projectors $\tfrac{\mathbb{I}\pm\sigma_k}{2}$, $\sigma_k$ denotes $k$-th Pauli matrix~\cite{Heimendahl2022},
\begin{equation}
\mathbb{STAB} := \left\{ \{E_i\} \;\middle|\;
E_i=\sum_{k,\pm}c^{(i)}_{k,\pm}\tfrac{\mathbb{I}\pm\sigma_k}{2},\ c^{(i)}_{k,\pm}\ge0,\ \textstyle\sum_i E_i=\mathbb{I} \right\}.
\end{equation}
Equivalently, each effect $E_i=a_0\mathbb{I}+\vec a\cdot\vec\sigma$ satisfies $a_0\ge\|\vec a\|_1$.
This set is precisely the set of single-qubit POVMs implementable by stabilizer operations; see Appendix~\ref{app:stabilizer-measurements} for a proof.

\paragraph{Dual cone.}
An operator $(d_0,\vec d)$ lies in $\cone^*(\mathrm{Stab})$ iff its overlap with every extremal pure stabilizer state is non-negative,
\begin{equation}
\cone^*(\mathrm{Stab})=\left\{(d_0,\vec d)\ \middle|\
\big\langle(d_0,\vec d),(\tfrac12,\pm\tfrac12\hat e_k)\big\rangle\ge 0,\ \forall k=x,y,z\right\}.
\end{equation}
where $\hat{e}_k=(\delta_{jk})_j$ is the $k$-th unit basis vector.
This yields $d_0\pm d_k\ge 0$, which simplifies to the $L_\infty$ condition
\begin{equation}
d_0\ge\|\vec d\|_\infty.
\end{equation}
For a general PSD operator $d_0\mathbb{I}+\vec d\cdot\vec\sigma\ge 0$ the constraint is instead $d_0\ge\|\vec d\|_2$; since $\|\vec d\|_2\ge\|\vec d\|_\infty$, we have $\mathrm{PSD}\subseteq\cone^*(\mathrm{Stab})$, confirming the hierarchical relation $R_{\mathbb{STAB}}^{\mathrm{Stab}}(\mathcal{M})\le R_{\mathbb{STAB}}(\mathcal{M})$.

\paragraph{Robustness of a target measurement.}
Consider a target projective measurement $\mathcal{M}=(\Pi,\mathbb{I}-\Pi)$ with $\Pi=(\tfrac12,\tfrac12\hat n)$, $\|\hat n\|_2=1$, and let the optimal free measurement be $\mathcal{F}=(F_1,F_2)$ parameterised symmetrically as $F_1=(f_0,\vec f)$, $F_2=(1-f_0,-\vec f)$. Membership $\mathcal{F}\in\mathbb{STAB}$ requires
\begin{equation}
f_0\ge\|\vec f\|_1\quad\text{and}\quad 1-f_0\ge\|\vec f\|_1
\implies \min\{f_0,1-f_0\}\ge\|\vec f\|_1.
\end{equation}
The witness robustness condition $(1+r)F_i-M_i\in\cone^*(\mathrm{Stab})$ reads, for $F_1$,
\begin{equation}
(1+r)f_0-\tfrac12\ge\Big\|(1+r)\vec f-\tfrac12\hat n\Big\|_\infty,
\end{equation}
and, for $F_2$,
\begin{equation}
(1+r)(1-f_0)-\tfrac12\ge\Big\|-(1+r)\vec f+\tfrac12\hat n\Big\|_\infty
=\Big\|(1+r)\vec f-\tfrac12\hat n\Big\|_\infty.
\end{equation}
Combining the two,
\begin{equation}
\min\{f_0,1-f_0\}\ge
\frac{\big\|(1+r)\vec f-\tfrac12\hat n\big\|_\infty+\tfrac12}{1+r}.
\end{equation}
Without loss of generality set $f_0=\tfrac12$; the problem reduces to finding the minimal $r$ with
\begin{equation}
\|\vec f\|_1\le\tfrac12\quad\text{and}\quad
\tfrac r2\ge\Big\|(1+r)\vec f-\tfrac12\hat n\Big\|_\infty.
\end{equation}
\paragraph{Geometric optimisation.}
Writing $\vec u:=(1+r)\vec f$, a given $r\ge0$ is feasible iff some $\vec u$ satisfies both
\begin{equation}
\|\vec u\|_1\le\frac{1+r}{2}\qquad\text{and}\qquad
\Big\|\vec u-\tfrac12\hat n\Big\|_\infty\le\frac r2,
\end{equation}
i.e. the $L_\infty$ cube of radius $\tfrac r2$ centred at $\tfrac12\hat n$ intersects the $L_1$ octahedron of radius $\tfrac{1+r}{2}$ centred at $0$. The intersection is non-empty iff the point of the cube closest to the origin in $L_1$ norm lies in the octahedron. This point is found componentwise: $u_k=0$ if $|n_k|\le r$, and $u_k=\tfrac12 n_k-\operatorname{sgn}(n_k)\tfrac r2$ otherwise, giving
\begin{equation}
\min_{\|\vec u-\frac12\hat n\|_\infty\le\frac r2}\|\vec u\|_1
=\tfrac12\sum_k\big(|n_k|-r\big)_+,\qquad (x)_+:=\max(x,0).
\end{equation}
Feasibility of $r$ is therefore the single scalar condition
\begin{equation}
\sum_k\big(|n_k|-r\big)_+\le 1+r,
\end{equation}
and $R^{\mathrm{Stab}}_{\mathbb{STAB}}(\mathcal{M})$ is the least such $r$.

The remaining scalar minimisation can be solved explicitly. Let
\begin{equation}
a_1\ge a_2\ge a_3\ge0
\end{equation}
denote the absolute values of the Bloch-vector components $n_k$ arranged in non-increasing order.

For any real numbers $x_k$, the sum of their positive parts satisfies
\begin{equation}
\sum_k(x_k)_+ = \max_{A\subseteq\{1,2,3\}}\sum_{k\in A}x_k,
\end{equation}
where the maximum is attained by selecting the positive components. Applying this identity to $x_k=a_k-r$ yields
\begin{equation}
\sum_{k=1}^3(a_k-r)_+=\max_{A\subseteq\{1,2,3\}} \left(\sum_{k\in A}a_k-|A|r\right).
\end{equation}
The feasibility condition is consequently equivalent to the collection of linear inequalities
\begin{equation}
\sum_{k\in A}a_k-|A|r\le1+r \qquad
\text{for every }A\subseteq\{1,2,3\},
\end{equation}
or
\begin{equation}
r\ge \frac{\sum_{k\in A}a_k-1}{1+|A|}
\qquad
\text{for every nonempty }A\subseteq\{1,2,3\}.
\end{equation}

For a fixed cardinality $|A|=m$, the strongest lower bound is obtained by choosing the $m$ largest components. The three possible nonempty cardinalities therefore give
\begin{equation}
r\ge\frac{a_1-1}{2},\qquad
r\ge\frac{a_1+a_2-1}{3},\qquad
r\ge\frac{a_1+a_2+a_3-1}{4}.
\end{equation}
Together with $r\ge0$, this gives the explicit piecewise-analytical expression
\begin{equation}
\boxed{
R^{\mathrm{Stab}}_{\mathbb{STAB}}(\mathcal M)
=
\max\left\{
0,\,
\frac{a_1-1}{2},\,
\frac{a_1+a_2-1}{3},\,
\frac{a_1+a_2+a_3-1}{4}
\right\}.
}
\label{eq:magic-explicit}
\end{equation}
Hence the original geometric optimisation reduces completely to sorting the three absolute Bloch components and evaluating a finite maximum. Figure~\ref{fig:magic} illustrates this expression along two Bloch-vector trajectories and confirms its agreement with direct numerical evaluation.

\begin{remark}[Generalized robustness]
For the generalized robustness $R_{\mathbb{STAB}}(\mathcal{M})$ the noise $\mathcal{W}$ must be a valid measurement, imposing the PSD condition $d_0\ge\|\vec d\|_2$ in place of the dual cone's $L_\infty$ constraint. Geometrically the cube becomes an $L_2$ ball, and the optimisation seeks the minimal $r$ with $\|\vec u\|_1\le\tfrac{1+r}{2}$ and $\|\vec u-\tfrac12\hat n\|_2\le\tfrac r2$. The two measures therefore probe different geometries, and (Fig.~\ref{fig:magic}) their values can even move in opposite directions along a Bloch trajectory.
\end{remark}

\begin{figure*}[t]
  \centering
  \includegraphics[width=\textwidth]{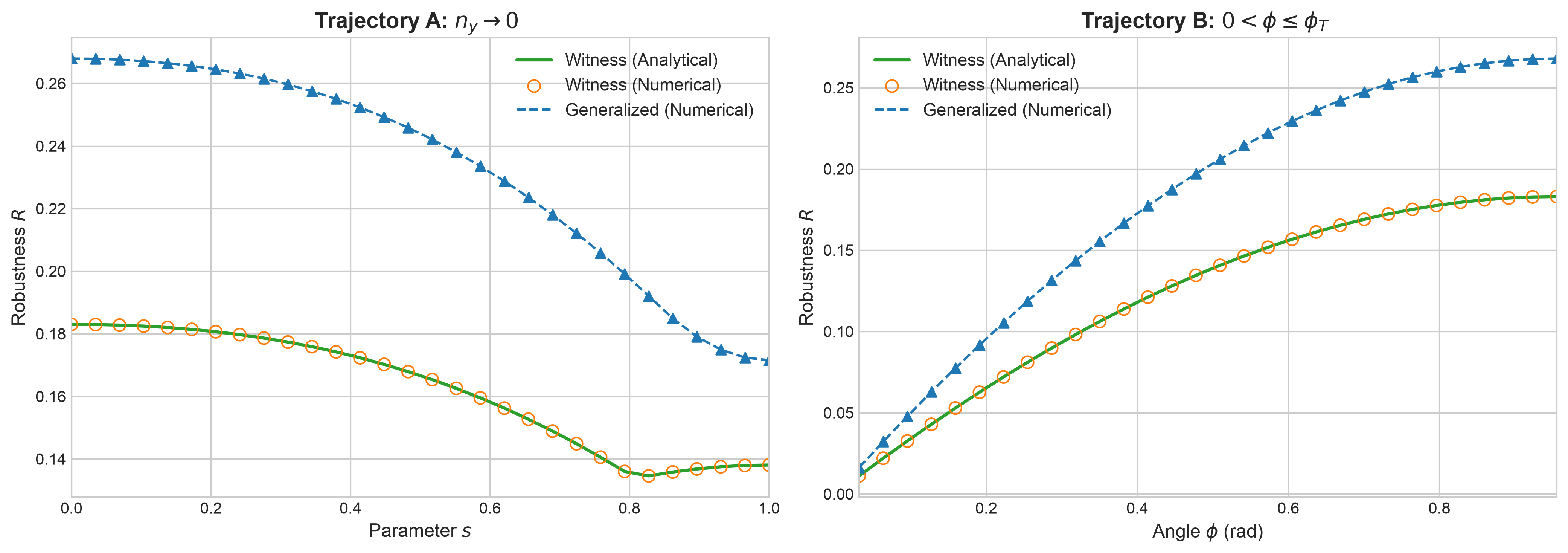}
  \caption{\textbf{Witness robustness and generalized robustness along two single-qubit Bloch-vector trajectories.} For the projective measurement $\mathcal{M}=(\Pi,I-\Pi)$, with $\Pi=\tfrac12(I+\hat n\cdot\vec\sigma)$ and $\|\hat n\|_2=1$, the witness robustness is computed analytically (green line) and numerically (orange circles), while the generalized robustness is computed numerically (blue dashed line and triangles). \textbf{(a)} Along $\hat n(s)\propto(1,1-s,1)$, $s\in[0,1]$, the witness robustness develops a kink near $s=0.81$ and subsequently increases, whereas the generalized robustness continues to decrease. \textbf{(b)} The trajectory $\hat n(\phi)=(\sin\phi/\sqrt2,\sin\phi/\sqrt2,\cos\phi)$ runs from $|0\rangle$ to the $T$ state, with $\phi\in[0,\arccos(1/\sqrt3)]$.}
  \label{fig:magic}
\end{figure*}

\subsection{PPT entanglement (two qubits)}
\label{sec:ppt}

In the theory of bipartite entanglement, a state $\sigma$ is PPT if its partial transpose satisfies $\sigma^\Gamma\ge 0$~\cite{Peres1996,Horodecki1996,HorodeckiReview2009}. As the free measurements we take the \emph{PPT measurements}---POVMs all of whose effects have a positive partial transpose,
\begin{equation}
\mathbb{PPT} := \Big\{ \{F_i\}_i \;\Big|\;
F_i\ge 0,\ F_i^\Gamma\ge 0\ \ \forall i,\ \ \textstyle\sum_i F_i=\mathbb{I} \Big\},
\end{equation}
the standard free class in quantum data hiding and PPT state discrimination~\cite{DiVincenzo2002}. It contains all LOCC and separable measurements and, unlike the separable class, is SDP-representable.

The PPT dual cone has been well studied~\cite{HorodeckiReview2009}: an operator $W$ is a decomposable entanglement witness if $W=P+Q^\Gamma$ for some $P,Q\ge 0$, and the dual cone of PPT state cone is exactly the set of such witnesses,
\begin{equation}
\cone^*(\mathrm{PPT})=\{ W\mid W=P+Q^\Gamma,\ P,Q\ge 0 \}.
\end{equation}
We compute $R_{\mathbb{PPT}}^{\mathrm{PPT}}(\mathcal{M})$ for the projective measurement $\mathcal{M}=\{\Pi,\mathbb{I}-\Pi\}$ onto the entangled pure state $|\psi\rangle=\lambda_1|00\rangle+\lambda_2|11\rangle$.

\paragraph{Decomposition of the partial transpose.}
The partial transpose of $\Pi=|\psi\rangle\langle\psi|$ over the second subsystem is
\begin{equation}
\Pi^\Gamma=\lambda_1^2|00\rangle\langle00|+\lambda_2^2|11\rangle\langle11|+\lambda_1\lambda_2(|01\rangle\langle10|+|10\rangle\langle01|),
\end{equation}
which we write as a difference $\Pi^\Gamma=P-Q$ of positive operators,
\begin{align}
P &:= \lambda_1^2|00\rangle\langle00|+\lambda_2^2|11\rangle\langle11|
     +\lambda_1\lambda_2|\Psi^+\rangle\langle\Psi^+|\ge 0,\\
Q &:= \lambda_1\lambda_2|\Psi^-\rangle\langle\Psi^-|\ge 0,
\end{align}
with $|\Psi^\pm\rangle=\tfrac1{\sqrt2}(|01\rangle\pm|10\rangle)$. Since
$|00\rangle,|11\rangle,|\Psi^+\rangle$ are mutually orthogonal, $P$ is diagonal in that basis with spectrum $\{\lambda_1^2,\lambda_2^2,\lambda_1\lambda_2\}\subset[0,1]$.

\paragraph{Construction of a free measurement $\mathcal{F}\in\mathbb{PPT}$.}
Parameterise the first effect through its partial transpose, $(1+r)F_1^\Gamma:=P$, set $F_2=\mathbb{I}-F_1$, and choose $r=\tfrac12\lambda_1\lambda_2$. To verify that $\mathcal{F}$ is a valid PPT measurement we check both effects and their partial transposes:
\begin{itemize}
\item \emph{Positivity of partial transposes.} By construction
      $F_1^\Gamma=P/(1+r)\ge 0$, and its largest eigenvalue is
      $\max(\lambda_1^2,\lambda_2^2,\lambda_1\lambda_2)/(1+r)\le 1$, so
      $F_2^\Gamma=\mathbb{I}-F_1^\Gamma\ge 0$.
\item \emph{Positivity of effects.} From the parameterisation
      $(1+r)F_1=P^\Gamma=\Pi+Q^\Gamma$, which in the basis
      $\{|00\rangle,|11\rangle,|01\rangle,|10\rangle\}$ is block-diagonal,
\begin{equation}
P^\Gamma=\begin{pmatrix}\lambda_1^2 & \tfrac12\lambda_1\lambda_2\\
\tfrac12\lambda_1\lambda_2 & \lambda_2^2\end{pmatrix}
\oplus
\begin{pmatrix}\tfrac12\lambda_1\lambda_2 & 0\\ 0 & \tfrac12\lambda_1\lambda_2\end{pmatrix}.
\end{equation}
      The first block has trace $\lambda_1^2+\lambda_2^2=1$ and determinant
      $\tfrac34\lambda_1^2\lambda_2^2\ge0$, so its eigenvalues are non-negative
      and sum to $1$; the second block is diagonal with entries
      $\tfrac12\lambda_1\lambda_2\in[0,1]$. Hence the spectrum of $P^\Gamma$ lies
      in $[0,1]$, giving $F_1=P^\Gamma/(1+r)\ge 0$ and $F_2=\mathbb{I}-F_1\ge 0$.
\end{itemize}
Thus $\mathcal{F}$ is a valid PPT measurement.

\paragraph{Verifying the witness robustness condition.}
We require $(1+r)F_i-M_i\in\cone^*(\mathrm{PPT})$:
\begin{itemize}
\item for $F_1$: $(1+r)F_1-\Pi=P^\Gamma-\Pi=Q^\Gamma$, and since $Q\ge0$ this is a valid PPT witness;
\item for $F_2$: $(1+r)F_2-(\mathbb{I}-\Pi)=r\mathbb{I}-Q^\Gamma$, which requires $r\ge\max_{\sigma\in\mathrm{PPT}}\langle Q^\Gamma,\sigma\rangle$.
\end{itemize}
Using $\langle Q^\Gamma,\sigma\rangle=\mathrm{Tr}(Q\sigma^\Gamma) =\lambda_1\lambda_2\langle\Psi^-|\sigma^\Gamma|\Psi^-\rangle$ and the identity $|\Psi^-\rangle\langle\Psi^-|=\tfrac12(\mathbb{I}-\mathrm{SWAP})$,
\begin{align}
\langle\Psi^-|\sigma^\Gamma|\Psi^-\rangle
&= \tfrac12\big(1-\mathrm{Tr}(\mathrm{SWAP}\cdot\sigma^\Gamma)\big)\nonumber\\
&= \tfrac12\big(1-\mathrm{Tr}(\mathrm{SWAP}^\Gamma\cdot\sigma)\big)\nonumber\\
&= \tfrac12\big(1-\langle\Omega|\sigma|\Omega\rangle\big),
\end{align}
with $|\Omega\rangle=\sum_i|ii\rangle$. As $\sigma\in\mathrm{PPT}\implies\sigma\ge0$, we have $\langle\Omega|\sigma|\Omega\rangle\ge0$, so the supremum is bounded by $\tfrac12\lambda_1\lambda_2$. Our choice $r=\tfrac12\lambda_1\lambda_2$ saturates it, giving the upper bound
\begin{equation}
R_{\mathbb{PPT}}^{\mathrm{PPT}}(\mathcal{M})\le\tfrac12\lambda_1\lambda_2
\end{equation}

\paragraph{Lower bound via operational advantage.}
By using Theorem~\ref{thm:main},
\begin{equation}
1+R^{\mathrm{PPT}}_{\mathbb{PPT}}(\mathcal M)=\max_{A_{\mathrm{PPT}}}
\frac{P_{\text{succ}}(A_{\mathrm{PPT}},\mathcal M)}{\max_{\mathcal F\in\mathbb{PPT}}P_{\text{succ}}(A_{\mathrm{PPT}},\mathcal F)},
\quad P_{\text{succ}}(A,\mathcal F)=\sum_i P_i\langle F_i,\sigma_i\rangle,
\end{equation}
the maximisation being over free-state ensembles
$A_{\mathrm{PPT}}=\{(P_i,\sigma_i)\}_{i=1,2}$ with $\sigma_i\in\mathrm{PPT}$. Any single ensemble thus lower-bounds $1+R^{\mathrm{PPT}}_{\mathbb{PPT}}$. With $|\Phi^+\rangle=\tfrac1{\sqrt2}(|00\rangle+|11\rangle)$, choose the weighted ensemble
\begin{equation}
P_1=\tfrac13,\ \sigma_1=\tfrac16\mathbb I+\tfrac13|\Phi^+\rangle\langle\Phi^+|;
\qquad
P_2=\tfrac23,\ \sigma_2=\tfrac13\big(\mathbb I-|\Phi^+\rangle\langle\Phi^+|\big).
\end{equation}

\emph{(1) Both states are PPT.} Using
$|\Phi^+\rangle\langle\Phi^+|^{\Gamma}=\tfrac12\mathrm{SWAP}$,
\begin{equation}
\sigma_1^{\Gamma}=\tfrac16(\mathbb I+\mathrm{SWAP})\ge 0,
\qquad
\sigma_2^{\Gamma}=\tfrac13\mathbb I-\tfrac16\mathrm{SWAP}\ge 0,
\end{equation}
since the eigenvalues of $\mathrm{SWAP}$ are $\pm1$; primal positivity
$\sigma_1,\sigma_2\ge0$ is immediate.

\emph{(2) Numerator.} With
$|\langle\psi|\Phi^+\rangle|^2=\tfrac12(\lambda_1+\lambda_2)^2=\tfrac12(1+2\lambda_1\lambda_2)$,
\begin{equation}
\langle\Pi,\sigma_1\rangle=\tfrac16+\tfrac13|\langle\psi|\Phi^+\rangle|^2=\tfrac{1+\lambda_1\lambda_2}{3},
\quad
\langle\mathbb I-\Pi,\sigma_2\rangle=\tfrac{5+2\lambda_1\lambda_2}{6},
\end{equation}
so the target success probability is
\begin{equation}
P_{\text{succ}}(A_{\mathrm{PPT}},\mathcal M) =\tfrac13\cdot\tfrac{1+\lambda_1\lambda_2}{3}+\tfrac23\cdot\tfrac{5+2\lambda_1\lambda_2}{6} =\frac{2+\lambda_1\lambda_2}{3}.
\end{equation}

\emph{(3) Denominator.} For any PPT measurement $\mathcal F=(F_1,\mathbb I-F_1)$,
\begin{equation}
P_{\text{succ}}(A_{\mathrm{PPT}},\mathcal F)=P_2+\langle F_1,\Delta\rangle, \quad \Delta:=P_1\sigma_1-P_2\sigma_2=-\tfrac16\mathbb I+\tfrac13|\Phi^+\rangle\langle\Phi^+|.
\end{equation}
Crucially $-\Delta$ is a decomposable witness: using
$|\Psi^-\rangle\langle\Psi^-|^{\Gamma}=\tfrac12\mathbb I-|\Phi^+\rangle\langle\Phi^+|$,
\begin{equation}
-\Delta=\tfrac16\mathbb I-\tfrac13|\Phi^+\rangle\langle\Phi^+|
=\Big(\tfrac13|\Psi^-\rangle\langle\Psi^-|\Big)^{\Gamma}=:Y^{\Gamma},
\quad Y\succeq0.
\end{equation}
Hence for every PPT effect $F_1$ (which has $F_1^{\Gamma}\succeq0$),
\begin{equation}
\langle F_1,\Delta\rangle=-\langle F_1,Y^{\Gamma}\rangle=-\mathrm{Tr}\!\big(F_1^{\Gamma}Y\big)\le 0,
\end{equation}
with equality at $F_1=0$. The optimal free strategy is therefore the trivial measurement $(0,\mathbb I)$, giving $\max_{\mathcal F\in\mathbb{PPT}}P_{\text{succ}}(A_{\mathrm{PPT}},\mathcal F)=P_2=\tfrac23$.

\emph{(4) Conclusion.} The advantage ratio for this ensemble is
\begin{equation}
\frac{P_{\text{succ}}(A_{\mathrm{PPT}},\mathcal M)}{\max_{\mathcal F\in\mathbb{PPT}}P_{\text{succ}}(A_{\mathrm{PPT}},\mathcal F)}
=\frac{(2+\lambda_1\lambda_2)/3}{2/3}=1+\tfrac12\lambda_1\lambda_2,
\end{equation}
so by Theorem~\ref{thm:main}, $R^{\mathrm{PPT}}_{\mathbb{PPT}}(\mathcal M)\ge\tfrac12\lambda_1\lambda_2$.
Combined with the upper bound,
\begin{equation}
\boxed{\,R^{\mathrm{PPT}}_{\mathbb{PPT}}(\mathcal M)=\tfrac12\lambda_1\lambda_2=\frac{C(|\psi\rangle)}{4}\,},
\end{equation}
where $C(|\psi\rangle)$ is the concurrence~\cite{Wootters1998} of the target pure state. The witness robustness of the projector onto $|\psi\rangle$ is thus, up to a factor of four, exactly the concurrence of the state it projects onto: vanishing for product states and peaking at $C=1$ (so $R=1/4$) for the maximally entangled projector.

\begin{figure}[t]
  \centering
  \includegraphics[width=0.75\columnwidth]{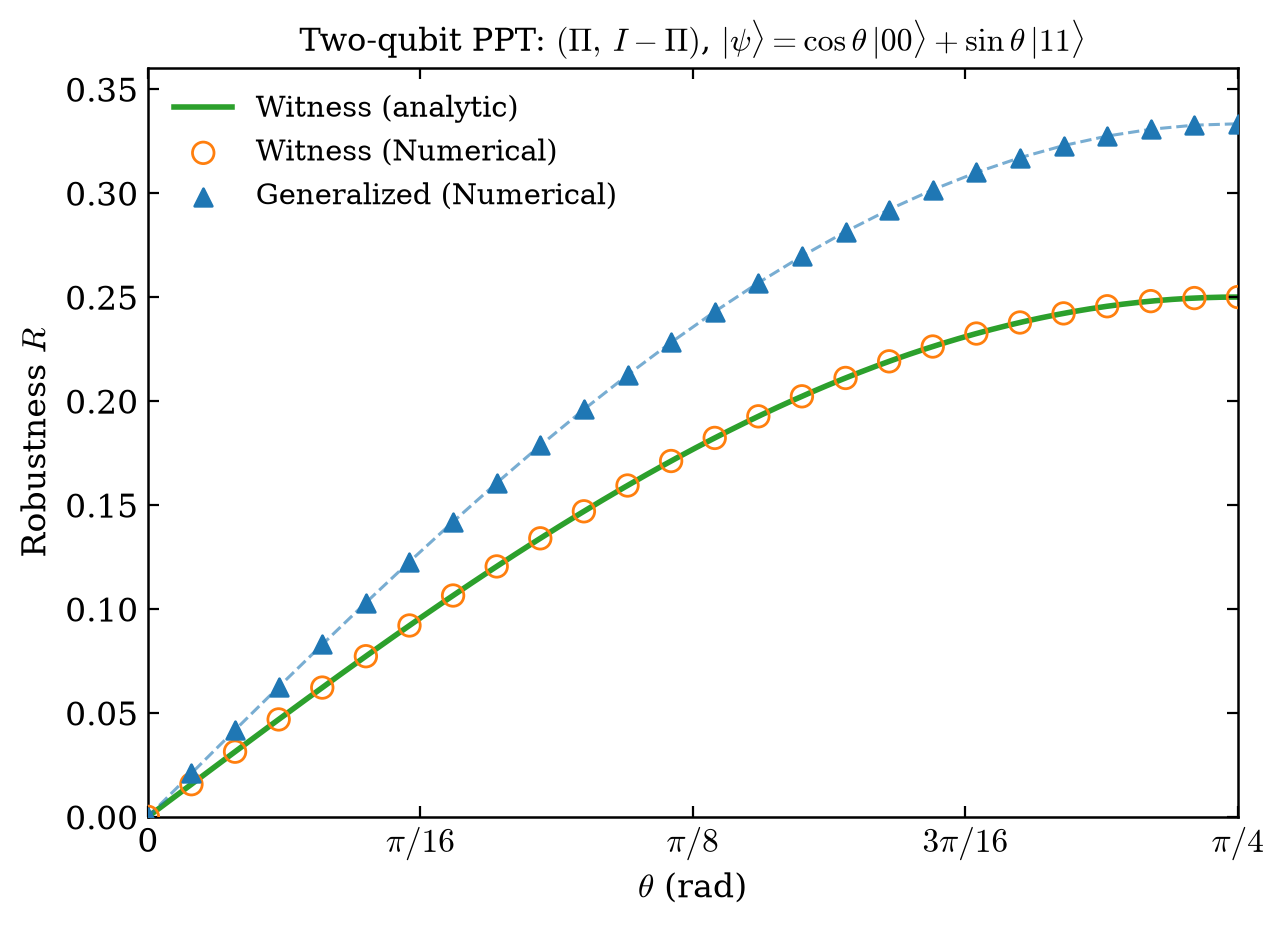}
  \caption{\textbf{Witness robustness and generalized robustness of a two-qubit projective measurement.} The measurement is $\mathcal{M}=(\Pi,\mathbb{I}-\Pi)$, where $\Pi=|\psi(\theta)\rangle\langle\psi(\theta)|$ and $|\psi(\theta)\rangle=\cos\theta|00\rangle+\sin\theta|11\rangle$, with $\theta\in[0,\pi/4]$. The witness robustness $R_{\mathbb{PPT}}^{\mathrm{PPT}}=C/4$, where $C=\sin(2\theta)$ is the concurrence of $|\psi(\theta)\rangle$, is evaluated analytically (green line) and numerically (orange circles). The generalized robustness is evaluated numerically (blue dashed line and triangles).}
  \label{fig:ppt}
\end{figure}

\section{Conclusion}
\label{sec:conclusion}
In this work, we introduced the witness robustness of quantum measurements and established its operational significance in the discrimination of free-state ensembles. A nonzero witness robustness guarantees the existence of a free-state ensemble for which the measurement achieves a strictly higher discrimination success probability than any free measurement. We then investigated several fundamental properties of this quantity, including convexity and monotonicity. Most importantly, unlike the standard and generalized robustness measures, witness robustness is not faithful in general. Its faithfulness requires additional structure in the underlying resource theory. Operationally, this means that not every resourceful measurement necessarily provides an advantage in the discrimination of free-state ensembles.

We identified two settings that illustrate this distinctive behavior. First, when the set of free states has nonempty interior, we proved that witness robustness is faithful. Second, when the resource theory admits a resource-destroying map, we showed that the witness robustness of every measurement vanishes. Consequently, in such resource theories, every ensemble of free states can be optimally discriminated by a free measurement, and resourceful measurements offer no advantage in this task. We further evaluated witness robustness in two concrete examples: the single-qubit resource theory of magic and the PPT-based resource theory of two-qubit entanglement. In both cases, the analytical results were found to agree with independent numerical calculations.

The framework developed here suggests several directions for future research. One natural direction is to extend the same construction to other discrimination tasks. For instance, one may consider the discrimination of free subchannels using free measurements. The corresponding operational advantage would then be quantified by a new robustness measure for states whose admissible noise is no longer restricted to the state space, mirroring the way in which witness robustness allows measurement noise to extend beyond the set of physical measurements. It would also be interesting to investigate more general decision-theoretic settings, including discrimination games in which different guesses are assigned different scores, as well as unambiguous discrimination tasks.

Another important direction is to move beyond the convex setting considered throughout this work. The present construction and several of its properties rely essentially on the convexity of the underlying resource theory. When the free set is nonconvex, the geometry of the problem changes substantially, and properties such as convexity, monotonicity, faithfulness, and the operational interpretation of witness robustness may require new formulations or may fail altogether. Resource theory of quantum discord, whose sets of free states are generally nonconvex, provides a particularly relevant setting in which to explore these questions. More broadly, these directions suggest a family of witness-based robustness measures adapted to different resource-theoretic decision problems, with their behavior determined not only by the operational task but also by the convex or nonconvex geometry of the underlying free set. \\

\textbf{\textit{Acknoledgement}}--- This work was supported by the National Science Centre Poland (Grant No. 2022/46/E/ST2/00115 and 2024/55/B/ST2/01590).

\appendix

\section{Table of notation}
\label{app:notation}

To facilitate reading, the key symbols used throughout are summarised below.
\begin{table}[h]
\centering
\renewcommand{\arraystretch}{1.3}
\begin{tabular}{@{}p{0.18\columnwidth} p{0.74\columnwidth}@{}}
\toprule
Symbol & Description \\
\midrule
$\Omega(V)$ & The state space defined on the real vector space $V$. \\
$\mathbb{M}$ & The complete set of all valid measurements. \\
$\F$ & The closed convex set of free states. \\
$\Mf$ & The closed convex set of free measurements. \\
$\mathcal{M}=\{M_i\}$ & A target measurement and its effects. \\
$\mathcal{F}=\{F_i\}$ & A free measurement and its effects ($\mathcal{F}\in\Mf$). \\
$\cone(\F)$ & The conical hull of the free-state set. \\
$\cone^*(\F)$ & The dual cone of $\F$: the set of all free-state witnesses. \\
$R^{\mathrm{std}}(\mathcal{M})$ & The standard robustness of a measurement. \\
$R(\mathcal{M})$ & The generalized robustness of a measurement. \\
$R^{\F}_{\Mf}(\mathcal{M})$ & The witness robustness of a measurement (this work). \\
$A_{\F}=\{p_i,\sigma_i\}$ & A free-state ensemble, $\sigma_i\in\F$. \\
$P_{\text{succ}}(A_{\F},\mathcal{M})$ & Discrimination success probability for $A_{\F}$ using $\mathcal{M}$. \\
$\Lambda,\Lambda^*$ & A resource-destroying map and its dual on the measurement space. \\
$\Gamma$ & The partial transpose over a bipartite subsystem. \\
$\mathrm{PPT},\mathbb{PPT}$ & PPT states and PPT measurements, respectively. \\
\bottomrule
\end{tabular}
\end{table}
\section{Characterization of single-qubit stabilizer measurements}
\label{app:stabilizer-measurements}

We show that the free-measurement set used in Sec.~\ref{sec:magic} is exactly
the set of POVMs implementable by stabilizer operations.  Stabilizer operations
are understood in their full operational sense, allowing arbitrary multiqubit
stabilizer ancillas, joint Clifford transformations, adaptive Pauli
measurements, discarding, classical randomness, and classical
post-processing~\cite{Heimendahl2022}.  Let
\begin{equation}
P_{k,s}:=\frac{\mathbb I+s\sigma_k}{2},
\qquad k\in\{x,y,z\},\quad s\in\{+1,-1\}.
\end{equation}

\begin{proposition}[Single-qubit stabilizer POVMs]
\label{prop:single-qubit-stabilizer-povms}
For POVMs on a single-qubit input with an arbitrary finite number of outcomes,
\begin{equation}
\mathsf M_{\mathrm{SO}}^{(1)}
=\mathbb{STAB},
\end{equation}
where $\mathsf M_{\mathrm{SO}}^{(1)}$ denotes the single-qubit POVMs whose
classical-output measurement channels are implementable by stabilizer
operations.
\end{proposition}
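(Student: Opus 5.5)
We prove the two inclusions separately.

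\emph{Inclusion $\mathbb{STAB}\subseteq\mathsf M_{\mathrm{SO}}^{(1)}$.} We first reduce to extreme points. Both sets are convex, and $\mathsf M_{\mathrm{SO}}^{(1)}$ is convex because classical randomness is allowed. Consider a POVM $\{E_i\}$ in $\mathbb{STAB}$ with $E_i=\sum_{k,s}c^{(i)}_{k,s}P_{k,s}$. Summing over $i$ and using $\sum_i E_i=\mathbb I$ gives
\begin{equation}
\sum_{k,s} C_{k,s}P_{k,s}=\mathbb I, \qquad C_{k,s}:=\sum_i c^{(i)}_{k,s}.
\end{equation}
Matching the Pauli components of both sides yields $C_{k,+}=C_{k,-}=:q_k$ and $\sum_k q_k=1$. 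This suggests the following protocol. Choose the Pauli axis $k$ with probability $q_k$ and measure $\sigma_k$. On outcome $s$, output label $i$ with probability $c^{(i)}_{k,s}/q_k$. The resulting effect for label $i$ is
\begin{equation}
\sum_{k,s} q_k\,\frac{c^{(i)}_{k,s}}{q_k}\,P_{k,s}=E_i,
\end{equation}
and any term with $q_k=0$ has vanishing coefficients and can be dropped. Every step is a stabilizer operation: Pauli measurement, classical randomness, and classical post-processing. This direction is therefore routine. Along the way we would also record the equivalence $a_0\ge\|\vec a\|_1$ used in the main text, which is simply the statement that the cone generated by the six $P_{k,s}$ is $\{(a_0,\vec a): a_0\ge\|\vec a\|_1\}$.

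\emph{Inclusion $\mathsf M_{\mathrm{SO}}^{(1)}\subseteq\mathbb{STAB}$.} This is the main obstacle, because a general stabilizer protocol may use multiqubit stabilizer ancillas, joint Cliffords and adaptive Pauli measurements. The plan is to avoid simulating protocols step by step and instead prove an invariant by duality. The key observation is that a stabilizer protocol maps stabilizer inputs to stabilizer states on every branch. More precisely, for any stabilizer input state $\sigma$, the unnormalized post-measurement states on every branch remain in the cone of stabilizer states. Summing over branches, each effect $E_i$ of the resulting POVM therefore satisfies
\begin{equation}
\mathrm{Tr}(E_i\sigma)=\sum_{\text{branches}\to i}\mathrm{Tr}(\text{final stabilizer cone element})\ge 0.
\end{equation}
This only shows $E_i\in\cone^*(\mathrm{Stab})$, i.e.\ $a_0\ge\|\vec a\|_\infty$, which is too weak. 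We need $a_0\ge\|\vec a\|_1$.

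To obtain the stronger condition we instead use the dual characterization: $E$ lies in the $L_1$ cone if and only if $\mathrm{Tr}(EX)\ge0$ for every $X$ in the dual $L_\infty$ cone. This cone is $\cone^*(\mathrm{Stab})=\{(x_0,\vec x):x_0\ge\|\vec x\|_\infty\}$, whose extreme rays are $\tfrac12(\mathbb I+\vec s\cdot\vec\sigma)$ with $\vec s\in\{\pm1\}^3$. These are exactly the (non-positive) ``cube-vertex'' operators, the single-qubit phase-point operators of the Wigner/quasi-probability picture. So it suffices to show that $\mathrm{Tr}(E_iA_{\vec s})\ge0$ for each such $A_{\vec s}$.

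We would establish this by a Heimendahl-style argument: the set of unnormalized operators whose overlap with every stabilizer effect is nonnegative is preserved by all stabilizer operations. We would check this on generators. Tensoring with stabilizer ancillas preserves the set. Clifford unitaries permute it. Pauli measurements and partial traces map it into the dual of the multiqubit stabilizer effect cone. Finally, the probability of a terminal classical outcome is the trace against the identity, or more generally against a stabilizer effect, and is therefore nonnegative. Applying this with input $A_{\vec s}$ gives $\mathrm{Tr}(E_iA_{\vec s})\ge0$ for every $\vec s$, hence $E_i\in\mathbb{STAB}$.

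The delicate step is verifying that tensoring $A_{\vec s}$ with stabilizer ancillas, followed by Clifford operations and Pauli measurements, keeps all branch outcome probabilities nonnegative. We would attempt this by induction on the number of Pauli measurements. The base case is the Clifford covariance of the set of cube-vertex operators on one qubit, which rotates them into one another up to sign flips. If needed, we would cite~\cite{Heimendahl2022} for the multiqubit closure statement.
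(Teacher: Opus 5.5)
Your direction $\mathbb{STAB}\subseteq\mathsf M_{\mathrm{SO}}^{(1)}$ is the paper's argument essentially verbatim: equal marginals $C_{k,+}=C_{k,-}=q_k$, a random Pauli axis, and classical post-processing. For the converse you take a genuinely different, dual route. It can be completed, but the step that carries all the weight is left open, and the plan you sketch for it aims at the wrong thing. There are two issues.

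\emph{The invariant set.} It must be $\cone^*(\mathrm{Stab}_n)$, the dual of the cone of $n$-qubit stabilizer \emph{states}. If ``stabilizer effect'' meant effects of stabilizer-implementable POVMs, then the statement that $A_{\vec s}$ lies in the invariant set for $n=1$ would be exactly the claim you are proving.

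\emph{The delicate step.} The real difficulty is showing $A_{\vec s}\otimes\tau\in\cone^*(\mathrm{Stab}_{1+m})$, i.e.\ that a one-qubit witness stays nonnegative on \emph{entangled} stabilizer states of the enlarged system. An induction whose base case is single-qubit Clifford covariance of the cube vertices does not touch this, since after a joint Clifford nothing is of product form. The clean fix is to argue through adjoints:
\begin{itemize}
\item appending $\tau$ has adjoint $\rho\mapsto\operatorname{Tr}_2[(\mathbb I\otimes\tau)\rho]$; for pure $\tau$ its projector is a product of commuting Pauli projectors, so this is a postselected Pauli measurement followed by a partial trace;
\item a Pauli branch has adjoint $\rho\mapsto\Pi\rho\Pi$;
\item partial trace has adjoint $\rho\mapsto\rho\otimes\mathbb I$;
\item Cliffords act by conjugation.
\end{itemize}
Each of these maps the stabilizer cone into itself, so the forward steps preserve $\cone^*(\mathrm{Stab}_n)$ branchwise. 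Written this way, the extreme rays $A_{\vec s}$ and the bidual step become unnecessary. Back-propagating the terminal $\mathbb I\in\cone(\mathrm{Stab}_m)$ along each branch gives $E_i\in\cone(\mathrm{Stab}_1)$, i.e.\ $a_0\ge\|\vec a\|_1$, directly.

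The paper instead repairs the first idea you discarded. A product stabilizer input only certifies $E_i\in\cone^*(\mathrm{Stab})$. Feeding half of $|\Phi^+\rangle$ into the protocol instead makes the output $\tfrac12\sum_iE_i^{\mathsf T}\otimes|i\rangle\langle i|$ a stabilizer state. Postselecting the register on $|i\rangle$ then leaves $E_i^{\mathsf T}/2\in\cone(\mathrm{Stab}_1)$, and since transposition only swaps the two $Y$-eigenprojectors, $E_i\in\mathbb{STAB}$.

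The two routes trade off as follows. The paper's rests on a single standard fact (stabilizer operations with a reference system, followed by postselection, keep states in the stabilizer cone) plus a transpose check, with no generator-by-generator verification. Yours, once the adjoint closure is written out, is a self-contained Heisenberg-picture argument needing neither a reference system nor the transpose, at the cost of checking each elementary operation.
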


\begin{proof}
Let $\mathcal M=\{E_i\}_{i=1}^N\in\mathsf M_{\mathrm{SO}}^{(1)}$ and associate
with it the classical-output measurement channel
\begin{equation}
\mathcal Q_{\mathcal M}(\rho)
=\sum_{i=1}^N\operatorname{Tr}(E_i\rho)|i\rangle\langle i|.
\end{equation}
The outcomes are encoded as distinct computational-basis states of a
multiqubit register; unused basis states, if any, are assigned zero
probability.  Prepare the Bell stabilizer state
\begin{equation}
|\Phi^+\rangle_{RA}=\frac{|00\rangle+|11\rangle}{\sqrt2}
\end{equation}
and apply $\mathcal Q_{\mathcal M}$ to $A$.  The identity on $R$ together with
the stabilizer protocol on $A$ is again a stabilizer operation.  Therefore the
normalized Choi state
\begin{equation}
J_{\mathcal Q_{\mathcal M}}
:=(\operatorname{id}_R\otimes\mathcal Q_{\mathcal M})
(|\Phi^+\rangle\langle\Phi^+|)
=\frac12\sum_{i=1}^N E_i^{\mathsf T}\otimes|i\rangle\langle i|
\end{equation}
is a stabilizer state.  Measuring the output register in the computational
basis and postselecting outcome $i$ is also a stabilizer operation.  It leaves
the subnormalized stabilizer state $E_i^{\mathsf T}/2$ on $R$, and hence
\begin{equation}
E_i^{\mathsf T}\in\cone\{P_{k,s}\}_{k,s}.
\end{equation}
Transposition fixes the $X$- and $Z$-eigenprojectors and exchanges the two
$Y$-eigenprojectors,
\begin{equation}
P_{x,s}^{\mathsf T}=P_{x,s},\qquad
P_{z,s}^{\mathsf T}=P_{z,s},\qquad
P_{y,s}^{\mathsf T}=P_{y,-s}.
\end{equation}
Thus $E_i\in\cone\{P_{k,s}\}_{k,s}$ for every $i$, proving
$\mathsf M_{\mathrm{SO}}^{(1)}\subseteq\mathbb{STAB}$.

Conversely, suppose $\{E_i\}_{i=1}^N\in\mathbb{STAB}$.  Write
\begin{equation}
E_i=\sum_{k,s}c_{k,s}^{(i)}P_{k,s},
\qquad c_{k,s}^{(i)}\ge0,
\end{equation}
and define $C_{k,s}:=\sum_i c_{k,s}^{(i)}$.  Comparing the coefficients of
$\mathbb I,\sigma_x,\sigma_y,\sigma_z$ in $\sum_iE_i=\mathbb I$ gives
\begin{equation}
C_{k,+}=C_{k,-}=:q_k,
\qquad \sum_kq_k=1.
\end{equation}
For $q_k>0$, define
\begin{equation}
p(i|k,s):=\frac{c_{k,s}^{(i)}}{q_k};
\end{equation}
when $q_k=0$, choose $p(i|k,s)$ arbitrarily.  Then
\begin{equation}
E_i=\sum_{k,s}q_kp(i|k,s)P_{k,s}.
\end{equation}
The POVM is therefore implemented by choosing $k$ with probability $q_k$,
measuring $\sigma_k$, and classically mapping the outcome $s$ to $i$ according
to $p(i|k,s)$.  This is a stabilizer operation, proving
$\mathbb{STAB}\subseteq\mathsf M_{\mathrm{SO}}^{(1)}$.
\end{proof}
\section{Numerical evaluation of the analytical examples}
\label{app:numerical-examples}

Here we give the four optimization problems used to generate the
numerical markers in Figs.~\ref{fig:magic} and~\ref{fig:ppt}. For each sampled value of the trajectory parameter, the corresponding program was solved and its optimal value of \(r\) was plotted. The green curves in the figures are obtained independently from the analytical expressions.

\subsection{Single-qubit magic}

Consider the binary projective measurement
\begin{equation}
\mathcal{M}=(\Pi,\mathbb{I}-\Pi),
\qquad
\Pi=\left(\frac12,\frac12\hat n\right),
\qquad
\|\hat n\|_2=1,
\end{equation}
where we use the Pauli representation
\(A=(a_0,\vec a)=a_0\mathbb{I}+\vec a\cdot\vec\sigma\).
As shown in Sec.~\ref{sec:magic}, the optimizing free measurement may be
written as
\begin{equation}
\mathcal{F}=(F_1,F_2),
\qquad
F_1=\left(\frac12,\vec f\right),
\qquad
F_2=\left(\frac12,-\vec f\right),
\qquad
\|\vec f\|_1\le\frac12.
\end{equation}
Following the notation of the main text, define
\begin{equation}
\vec u:=(1+r)\vec f.
\end{equation}

The numerical values of the witness robustness shown by the orange circles
in Fig.~\ref{fig:magic} are obtained from
\begin{align}
R_{\mathbb{STAB}}^{\mathrm{Stab}}(\mathcal{M})
={}&
\min_{r,\vec u}\ r
\nonumber\\
\text{subject to}\quad
&r\ge0,
\nonumber\\
&\|\vec u\|_1\le\frac{1+r}{2},
\nonumber\\
&\left\|\vec u-\frac{\hat n}{2}\right\|_\infty
\le\frac r2.
\label{eq:numerical-magic-witness}
\end{align}

For comparison, the numerical values of the generalized robustness shown
by the blue triangles are obtained from
\begin{align}
R_{\mathbb{STAB}}(\mathcal{M})
={}&
\min_{r,\vec u}\ r
\nonumber\\
\text{subject to}\quad
&r\ge0,
\nonumber\\
&\|\vec u\|_1\le\frac{1+r}{2},
\nonumber\\
&\left\|\vec u-\frac{\hat n}{2}\right\|_2
\le\frac r2.
\label{eq:numerical-magic-generalized}
\end{align}

Thus, the two numerical optimizations differ only in the final norm:
the witness robustness uses the stabilizer-state witness condition,
represented by the \(\ell_\infty\) norm, whereas the generalized robustness
uses positivity on all qubit states, represented by the Euclidean norm.

\subsection{Two-qubit PPT entanglement}

Let
\begin{equation}
\mathcal{M}_\theta=\{M_1,M_2\},
\qquad
M_1=|\psi(\theta)\rangle\langle\psi(\theta)|,
\qquad
M_2=\mathbb I-M_1,
\end{equation}
where
\begin{equation}
|\psi(\theta)\rangle
=
\cos\theta\,|00\rangle+\sin\theta\,|11\rangle.
\end{equation}
We again introduce the scaled free effects \(G_i=(1+r)F_i\). Denote partial
transposition on the second qubit by \(\Gamma\). The free-measurement
conditions are
\begin{equation}
G_1+G_2=(1+r)\mathbb I,
\qquad
G_i\succeq0,
\qquad
G_i^\Gamma\succeq0.
\end{equation}

Since the dual cone of the PPT cone is the decomposable cone,
\begin{equation}
\operatorname{cone}^{*}(\mathrm{PPT})
=
\left\{
P+Q^\Gamma\;\middle|\;P,Q\succeq0
\right\},
\end{equation}
the witness robustness, shown by the orange circles in Fig.~\ref{fig:ppt}, is obtained from
\begin{equation}
\begin{aligned}
R_{\mathbb{PPT}}^{\mathrm{PPT}}(\mathcal{M}_\theta)
={}&\min_{r,\{G_i,P_i,Q_i\}}\ r\\
\text{subject to}\quad
&r\geq0,\\
&G_1+G_2=(1+r)\mathbb I,\\
&G_i\succeq0,\qquad G_i^\Gamma\succeq0,\\
&G_i-M_i=P_i+Q_i^\Gamma,\\
&P_i\succeq0,\qquad Q_i\succeq0,
\qquad i=1,2.
\end{aligned}
\label{eq:ppt-witness-numerical}
\end{equation}

For the generalized robustness, the noise effects must be positive
semidefinite. Thus, the decomposable-cone constraint is replaced by
\(G_i-M_i\succeq0\). The blue triangles in Fig.~\ref{fig:ppt} are obtained
from
\begin{equation}
\begin{aligned}
R_{\mathbb{PPT}}(\mathcal{M}_\theta)
={}&\min_{r,\{G_i\}}\ r\\
\text{subject to}\quad
&r\geq0,\\
&G_1+G_2=(1+r)\mathbb I,\\
&G_i\succeq0,\qquad G_i^\Gamma\succeq0,\\
&G_i-M_i\succeq0,
\qquad i=1,2.
\end{aligned}
\label{eq:ppt-generalized-numerical}
\end{equation}
\bibliographystyle{quantum}
\bibliography{refs}

@Article{ChitambarGour2019,
  author        = {Chitambar, Eric and Gour, Gilad},
  journal       = {Reviews of Modern Physics},
  title         = {{Quantum resource theories}},
  year          = {2019},
  pages         = {025001},
  volume        = {91},
  archiveprefix = {arXiv},
  doi           = {10.1103/RevModPhysical91.025001},
  eprint        = {1806.06107},
}

@Article{VidalTarrach1999,
  author  = {Vidal, G. and Tarrach, R.},
  journal = {Physical Review A},
  title   = {{Robustness of entanglement}},
  year    = {1999},
  pages   = {141},
  volume  = {59},
  doi     = {10.1103/PhysRevA.59.141},
}

@Article{TakagiRegula2019,
  author        = {Takagi, Ryuji and Regula, Bartosz},
  journal       = {Physical Review X},
  title         = {{General resource theories in quantum mechanics and beyond: Operational characterization via discrimination tasks}},
  year          = {2019},
  pages         = {031053},
  volume        = {9},
  archiveprefix = {arXiv},
  doi           = {10.1103/PhysRevX.9.031053},
  eprint        = {1901.08127},
}

@Article{OszmaniecBiswas2019,
  author        = {Oszmaniec, Micha{\l} and Biswas, Tanmoy},
  journal       = {Quantum},
  title         = {{Operational relevance of resource theories of quantum measurements}},
  year          = {2019},
  pages         = {133},
  volume        = {3},
  archiveprefix = {arXiv},
  doi           = {10.22331/q-2019-04-26-133},
  eprint        = {1901.08566},
}

@Article{Skrzypczyk2019,
  author  = {Skrzypczyk, Paul and {\v S}upi{\'c}, Ivan and Cavalcanti, Daniel},
  journal = {Physical Review Letters},
  title   = {{All sets of incompatible measurements give an advantage in quantum state discrimination}},
  year    = {2019},
  pages   = {130403},
  volume  = {122},
  doi     = {10.1103/PhysRevLett.122.130403},
}

@Article{Uola2019,
  author  = {Uola, Roope and Kraft, Tristan and Shang, Jiangwei and Yu, Xiao-Dong and G{\"u}hne, Otfried},
  journal = {Physical Review Letters},
  title   = {{Quantifying quantum resources with conic programming}},
  year    = {2019},
  pages   = {130404},
  volume  = {122},
  doi     = {10.1103/PhysRevLett.122.130404},
}

@Article{Liu2017,
  author        = {Liu, Zi-Wen and Hu, Xueyuan and Lloyd, Seth},
  journal       = {Physical Review Letters},
  title         = {{Resource destroying maps}},
  year          = {2017},
  pages         = {060502},
  volume        = {118},
  archiveprefix = {arXiv},
  doi           = {10.1103/PhysRevLett.118.060502},
  eprint        = {1606.03723},
}

@Article{Wootters1998,
  author        = {Wootters, William K.},
  journal       = {Physical Review Letters},
  title         = {{Entanglement of formation of an arbitrary state of two qubits}},
  year          = {1998},
  pages         = {2245},
  volume        = {80},
  archiveprefix = {arXiv},
  doi           = {10.1103/PhysRevLett.80.2245},
  eprint        = {quant-ph/9709029},
}

@Article{Veitch2014,
  author        = {Veitch, Victor and Mousavian, S. A. Hamed and Gottesman, Daniel and Emerson, Joseph},
  journal       = {New Journal of Physics},
  title         = {{The resource theory of stabilizer quantum computation}},
  year          = {2014},
  pages         = {013009},
  volume        = {16},
  archiveprefix = {arXiv},
  doi           = {10.1088/1367-2630/16/1/013009},
  eprint        = {1307.7171},
}

@Article{Howard2017,
  author        = {Howard, Mark and Campbell, Earl T.},
  journal       = {Physical Review Letters},
  title         = {{Application of a resource theory for magic states to fault-tolerant quantum computing}},
  year          = {2017},
  pages         = {090501},
  volume        = {118},
  archiveprefix = {arXiv},
  doi           = {10.1103/PhysRevLett.118.090501},
  eprint        = {1609.07488},
}

@Article{Heinrich2019,
  author        = {Heinrich, Markus and Gross, David},
  journal       = {Quantum},
  title         = {{Robustness of magic and symmetries of the stabiliser polytope}},
  year          = {2019},
  pages         = {132},
  volume        = {3},
  archiveprefix = {arXiv},
  doi           = {10.22331/q-2019-04-08-132},
  eprint        = {1807.10296},
}

@Article{BravyiGosset2016,
  author  = {Bravyi, Sergey and Gosset, David},
  journal = {Physical Review Letters},
  title   = {{Improved classical simulation of quantum circuits dominated by Clifford gates}},
  year    = {2016},
  pages   = {250501},
  volume  = {116},
  doi     = {10.1103/PhysRevLett.116.250501},
}

@Article{DiVincenzo2002,
  author        = {DiVincenzo, David P. and Leung, Debbie W. and Terhal, Barbara M.},
  journal       = {IEEE Transactions on Information Theory},
  title         = {{Quantum data hiding}},
  year          = {2002},
  number        = {3},
  pages         = {580},
  volume        = {48},
  archiveprefix = {arXiv},
  doi           = {10.1109/18.985948},
  eprint        = {quant-ph/0103098},
}

@Article{Peres1996,
  author        = {Peres, Asher},
  journal       = {Physical Review Letters},
  title         = {{Separability criterion for density matrices}},
  year          = {1996},
  pages         = {1413},
  volume        = {77},
  archiveprefix = {arXiv},
  doi           = {10.1103/PhysRevLett.77.1413},
  eprint        = {quant-ph/9604005},
}

@Article{Horodecki1996,
  author        = {Horodecki, Micha{\l} and Horodecki, Pawe{\l} and Horodecki, Ryszard},
  journal       = {Physics Letters A},
  title         = {{Separability of mixed states: necessary and sufficient conditions}},
  year          = {1996},
  pages         = {1},
  volume        = {223},
  archiveprefix = {arXiv},
  doi           = {10.1016/S0375-9601(96)00706-2},
  eprint        = {quant-ph/9605038},
}

@Article{HorodeckiReview2009,
  author        = {Horodecki, Ryszard and Horodecki, Pawe{\l} and Horodecki, Micha{\l} and Horodecki, Karol},
  journal       = {Reviews of Modern Physics},
  title         = {{Quantum entanglement}},
  year          = {2009},
  pages         = {865},
  volume        = {81},
  archiveprefix = {arXiv},
  doi           = {10.1103/RevModPhysical81.865},
  eprint        = {quant-ph/0702225},
}

@Article{Sion1958,
  author  = {Sion, Maurice},
  journal = {Pacific Journal of Mathematics},
  title   = {{On general minimax theorems}},
  year    = {1958},
  number  = {1},
  pages   = {171},
  volume  = {8},
  doi     = {10.2140/pjm.1958.8.171},
}

@Article{Steiner2003,
  author        = {Steiner, Michael},
  journal       = {Physical Review A},
  title         = {{Generalized robustness of entanglement}},
  year          = {2003},
  pages         = {054305},
  volume        = {67},
  archiveprefix = {arXiv},
  doi           = {10.1103/PhysRevA.67.054305},
  eprint        = {quant-ph/0304009},
}

@Article{HarrowNielsen2003,
  author  = {Harrow, Aram W. and Nielsen, Michael A.},
  journal = {Physical Review A},
  title   = {{Robustness of quantum gates in the presence of noise}},
  year    = {2003},
  pages   = {012308},
  volume  = {68},
  doi     = {10.1103/PhysRevA.68.012308},
}

@Article{Regula2018,
  author        = {Regula, Bartosz},
  journal       = {Journal of Physics A: Mathematical and Theoretical},
  title         = {{Convex geometry of quantum resource quantification}},
  year          = {2018},
  pages         = {045303},
  volume        = {51},
  archiveprefix = {arXiv},
  doi           = {10.1088/1751-8121/aa9100},
  eprint        = {1707.06298},
}

@Article{BrandaoGour2015,
  author  = {Brand{\~a}o, Fernando G. S. L. and Gour, Gilad},
  journal = {Physical Review Letters},
  title   = {{Reversible framework for quantum resource theories}},
  year    = {2015},
  pages   = {070503},
  volume  = {115},
  doi     = {10.1103/PhysRevLett.115.070503},
}

@Misc{Hardy2001,
  author        = {Hardy, Lucien},
  title         = {{Quantum theory from five reasonable axioms}},
  year          = {2001},
  archiveprefix = {arXiv},
  eprint        = {quant-ph/0101012},
}

@Article{Barrett2007,
  author  = {Barrett, Jonathan},
  journal = {Physical Review A},
  title   = {{Information processing in generalized probabilistic theories}},
  year    = {2007},
  pages   = {032304},
  volume  = {75},
  doi     = {10.1103/PhysRevA.75.032304},
}

@Article{Plavala2023,
  author  = {Pl{\'a}vala, Martin},
  journal = {Physics Reports},
  title   = {{General probabilistic theories: An introduction}},
  year    = {2023},
  pages   = {1},
  volume  = {1033},
  doi     = {10.1016/j.physrep.2023.09.001},
}

@Article{Guerini2017,
  author  = {Guerini, Leonardo and Bavaresco, Jessica and Terra Cunha, Marcelo and Ac{\'i}n, Antonio},
  journal = {Journal of Mathematical Physics},
  title   = {{Operational framework for quantum measurement simulability}},
  year    = {2017},
  pages   = {092102},
  volume  = {58},
  doi     = {10.1063/1.4994303},
}

@Article{Baumgratz2014,
  author        = {Baumgratz, T. and Cramer, M. and Plenio, M. B.},
  journal       = {Physical Review Letters},
  title         = {{Quantifying coherence}},
  year          = {2014},
  pages         = {140401},
  volume        = {113},
  archiveprefix = {arXiv},
  doi           = {10.1103/PhysRevLett.113.140401},
  eprint        = {1311.0275},
}

@Article{Streltsov2017,
  author        = {Streltsov, Alexander and Adesso, Gerardo and Plenio, Martin B.},
  journal       = {Reviews of Modern Physics},
  title         = {{Colloquium: Quantum coherence as a resource}},
  year          = {2017},
  pages         = {041003},
  volume        = {89},
  archiveprefix = {arXiv},
  doi           = {10.1103/RevModPhysical89.041003},
  eprint        = {1609.02439},
}

@Article{GourSpekkens2008,
  author        = {Gour, Gilad and Spekkens, Robert W.},
  journal       = {New Journal of Physics},
  title         = {{The resource theory of quantum reference frames: manipulations and monotones}},
  year          = {2008},
  pages         = {033023},
  volume        = {10},
  archiveprefix = {arXiv},
  doi           = {10.1088/1367-2630/10/3/033023},
  eprint        = {0711.0043},
}

@Book{boyd2004convex,
  author    = {Boyd, Stephen and Vandenberghe, Lieven},
  publisher = {Cambridge University Press},
  title     = {Convex Optimization},
  year      = {2004},
  doi       = {10.1017/CBO9780511804441},
  place     = {Cambridge},
}

@Article{PhysRevLett.127.060402,
  author    = {Regula, Bartosz and Takagi, Ryuji},
  journal   = {Physical Review Letters},
  month     = {Aug},
  title     = {One-Shot Manipulation of Dynamical Quantum Resources},
  year      = {2021},
  pages     = {060402},
  volume    = {127},
  doi       = {10.1103/PhysRevLett.127.060402},
  issue     = {6},
  numpages  = {9},
  publisher = {American Physical Society},
  url       = {https://link.aps.org/doi/10.1103/PhysRevLett.127.060402},
}

@Article{Heimendahl2022,
  author        = {Heimendahl, Arne and Heinrich, Markus and Gross, David},
  journal       = {Journal of Mathematical Physics},
  title         = {The axiomatic and the operational approaches to resource theories of magic do not coincide},
  year          = {2022},
  number        = {11},
  pages         = {112201},
  volume        = {63},
  archiveprefix = {arXiv},
  doi           = {10.1063/5.0085774},
  eprint        = {2011.11651},
}

@Misc{ao2026geometricorigins,
  author        = {Jingsong Ao and Aby Philip and Alexander Streltsov},
  title         = {Resourcefulness without Resource: Geometric Origins and Robustness},
  year          = {2026},
  archiveprefix = {arXiv},
  eprint        = {2606.31516},
  primaryclass  = {quant-ph},
  url           = {https://arxiv.org/abs/2606.31516},
}

@Article{Helstrom1976,
  author    = {Helstrom, Carl W},
  journal   = {Journal of Statistical Physics},
  title     = {Quantum detection and estimation theory},
  year      = {1969},
  number    = {2},
  pages     = {231},
  volume    = {1},
  doi       = {10.1007/BF01007479},
  publisher = {Springer},
}

@Article{Holevo1973,
  author  = {Holevo, Alexander S.},
  journal = {Journal of Multivariate Analysis},
  title   = {Statistical Decision Theory for Quantum Systems},
  year    = {1973},
  number  = {4},
  pages   = {337--394},
  volume  = {3},
  doi     = {10.1016/0047-259X(73)90028-6},
}

@Misc{philip2025robustnessdatahiding,
  author        = {Aby Philip and Alexander Streltsov},
  title         = {Robustness of quantum data hiding against entangled catalysts and memory},
  year          = {2025},
  archiveprefix = {arXiv},
  eprint        = {2511.04408},
  primaryclass  = {quant-ph},
  url           = {https://arxiv.org/abs/2511.04408},
}

@Article{PhysRevLett.86.5807,
  author    = {Terhal, Barbara M. and DiVincenzo, David P. and Leung, Debbie W.},
  journal   = {Physical Review Letters},
  month     = {Jun},
  title     = {Hiding {B}its in {B}ell States},
  year      = {2001},
  pages     = {5807--5810},
  volume    = {86},
  doi       = {10.1103/PhysRevLett.86.5807},
  issue     = {25},
  numpages  = {0},
  publisher = {American Physical Society},
  url       = {https://link.aps.org/doi/10.1103/PhysRevLett.86.5807},
}

@Article{Hayden_2004,
  author    = {Hayden, Patrick and Leung, Debbie and Shor, Peter W. and Winter, Andreas},
  journal   = {Communications in Mathematical Physics},
  month     = jul,
  title     = {{Randomizing Quantum States: Constructions and Applications}},
  year      = {2004},
  issn      = {1432-0916},
  number    = {2},
  pages     = {371–391},
  volume    = {250},
  doi       = {10.1007/s00220-004-1087-6},
  publisher = {Springer Science and Business Media LLC},
  url       = {http://dx.doi.org/10.1007/s00220-004-1087-6},
}

@Article{Aubrun_2015,
  author    = {Aubrun, Guillaume and Lancien, C{\'e}cilia},
  journal   = {Quantum Information and Computation},
  month     = {apr},
  title     = {Locally restricted measurements on a multipartite quantum system: data hiding is generic},
  year      = {2015},
  pages     = {513--540},
  volume    = {15},
  doi       = {10.26421/qic15.5-6},
  issue     = {5-6},
  publisher = {Rinton Press},
  url       = {http://dx.doi.org/10.26421/QIC15.5-6},
}

@Article{SkrzypczykLinden2019,
  author    = {Skrzypczyk, Paul and Linden, Noah},
  journal   = {Physical Review Letters},
  month     = {Apr},
  title     = {{Robustness of measurement, discrimination games, and accessible information}},
  year      = {2019},
  pages     = {140403},
  volume    = {122},
  doi       = {10.1103/PhysRevLett.122.140403},
  issue     = {14},
  numpages  = {6},
  publisher = {American Physical Society},
  url       = {https://link.aps.org/doi/10.1103/PhysRevLett.122.140403},
}

@Article{EggelingWerner2002,
  author    = {Eggeling, Tilo and Werner, Reinhard F.},
  journal   = {Physical Review Letters},
  month     = {Aug},
  title     = {{Hiding classical data in multipartite quantum states}},
  year      = {2002},
  pages     = {097905},
  volume    = {89},
  doi       = {10.1103/PhysRevLett.89.097905},
  issue     = {9},
  numpages  = {4},
  publisher = {American Physical Society},
  url       = {https://link.aps.org/doi/10.1103/PhysRevLett.89.097905},
}

\end{document}